\documentclass[conference]{IEEEtran}
\IEEEoverridecommandlockouts
\usepackage{cite}
\usepackage{amsmath,amssymb,amsfonts}
\usepackage{algorithmic}
\usepackage{graphicx}
\usepackage{textcomp}
\usepackage{xcolor}
\usepackage{hyperref}
\usepackage{amsthm}        

\newtheorem{proposition}{Proposition}[section]
\usepackage{graphicx}      
\usepackage{booktabs}      
\usepackage{multirow}      
\usepackage{enumitem}      
\usepackage{cleveref}
\usepackage{subcaption}      

\newcommand{\ie}{\emph{i.e., }}
\newcommand{\eg}{\emph{e.g., }}

\newcommand{\cf}{\emph{cf. }}

\def\BibTeX{{\rm B\kern-.05em{\sc i\kern-.025em b}\kern-.08em
    T\kern-.1667em\lower.7ex\hbox{E}\kern-.125emX}}

\begin{document}

\IEEEoverridecommandlockouts 


\title{SpecFormer: Mitigating Embedding and Attention Collapse via Spectral-Aware Transformer for Recommendation}

\author{\IEEEauthorblockN{1\textsuperscript{st} Yu Cui $^{\ast}$\thanks{$^{\ast}$Equal contribution. $^{\dagger}$Corresponding authors.}}
\IEEEauthorblockA{\textit{Zhejiang University}\\
Hangzhou, China \\
cuiyu23@zju.edu.cn\\
ORCID: 0009-0001-6203-3022}

\\
\IEEEauthorblockN{4\textsuperscript{rd} Hao Zhang}
\IEEEauthorblockA{\textit{Alibaba Group} \\
Beijing, China \\
zh138764@alibaba-inc.com\\
ORCID:0009-0002-3137-4980}

\\
\IEEEauthorblockN{7\textsuperscript{th} Can	Wang}
\IEEEauthorblockA{\textit{Zhejiang University}\\
Hangzhou, China \\
wcan@zju.edu.cn \\
ORCID: 0000-0002-5890-4307}

\and
\IEEEauthorblockN{2\textsuperscript{nd} Yi Xu$^{\ast}$}
\IEEEauthorblockA{\textit{Alibaba Group} \\
Beijing, China \\
xy397404@alibaba-inc.com\\
ORCID:0009-0007-3571-8791}

\\
\IEEEauthorblockN{5\textsuperscript{rd} Yu Zhang}
\IEEEauthorblockA{\textit{Alibaba Group} \\
Beijing, China \\
daoji@alibaba-inc.com \\
ORCID:0000-0002-6057-7886}

\\
\IEEEauthorblockN{8\textsuperscript{rd} Jinxin Hu$^{\dagger}$}
\IEEEauthorblockA{\textit{Alibaba Group} \\
Beijing, China  \\
jinxin.hjx@alibaba-inc.com\\
ORCID:0000-0002-7252-5207}

\and
\IEEEauthorblockN{3\textsuperscript{rd} Jiahao Wang}
\IEEEauthorblockA{\textit{Alibaba Group} \\
Beijing, China \\
wjh177423@alibaba-inc.com\\
ORCID:0009-0000-3046-2351}

\\
\IEEEauthorblockN{6\textsuperscript{rd} Xiaoyi Zeng}
\IEEEauthorblockA{\textit{Alibaba Group} \\
Beijing, China \\
yuanhan@taobao.com\\
ORCID:0000-0002-3742-4910}

\\
\IEEEauthorblockN{9\textsuperscript{th} Jiawei Chen$^{\dagger}$}
\IEEEauthorblockA{\textit{Zhejiang University}\\
Hangzhou, China \\
sleepyhunt@zju.edu.cn\\
ORCID: 0000-0002-4752-2629}
}

\maketitle

\begin{abstract}
Transformer architectures have achieved remarkable success across diverse domains; however, directly applying their standard self-attention mechanism to recommendation often yields suboptimal performance, sometimes even trailing behind well-designed simple recommendation models. In this paper, we reveal that this performance bottleneck stems from severe embedding and attention collapse unique to recommendation scenarios. The heterogeneity and long-tail nature of recommendation data lead to a severe spectral collapse dominated by a few principal singular values. We further theoretically demonstrate that this triggers a vicious cycle in recommendation model's forward and backward propagation, which accelerates embedding and attention collapse and limits the model's scaling capability with increased depth. 

To address these issues, we propose SpecFormer, a novel Spectral-Aware Transformer designed for mitigating embedding and attention collapse in recommendation. Specifically, SpecFormer introduces 1) a Learnable Spectral Softening module to dynamically smooth the singular values distribution of the input token embeddings; 2) a Spectrum-softened Attention mechanism to model feature interaction under a more uniform spectral distribution space; 3) a Spectral Residual Position Encoding via Taylor expansion of singular values, explicitly providing a spectral inductive bias for feature interactions. Extensive experiments on one industrial and two public datasets demonstrate that SpecFormer significantly outperforms state-of-the-art baselines. Notably, SpecFormer has been successfully deployed in a real-world commercial recommender system and exhibits exceptional scaling capabilities: stacking SpecFormer layers actively improves the attention effective rank and recommendation performance.
\end{abstract}

\begin{IEEEkeywords}
 Recommender System, Transformer Architecture, Attention Mechanism, Spectral Collapse
\end{IEEEkeywords}

\section{Introduction}

\begin{figure}[t]
    \centering
    \includegraphics[width=1\linewidth]{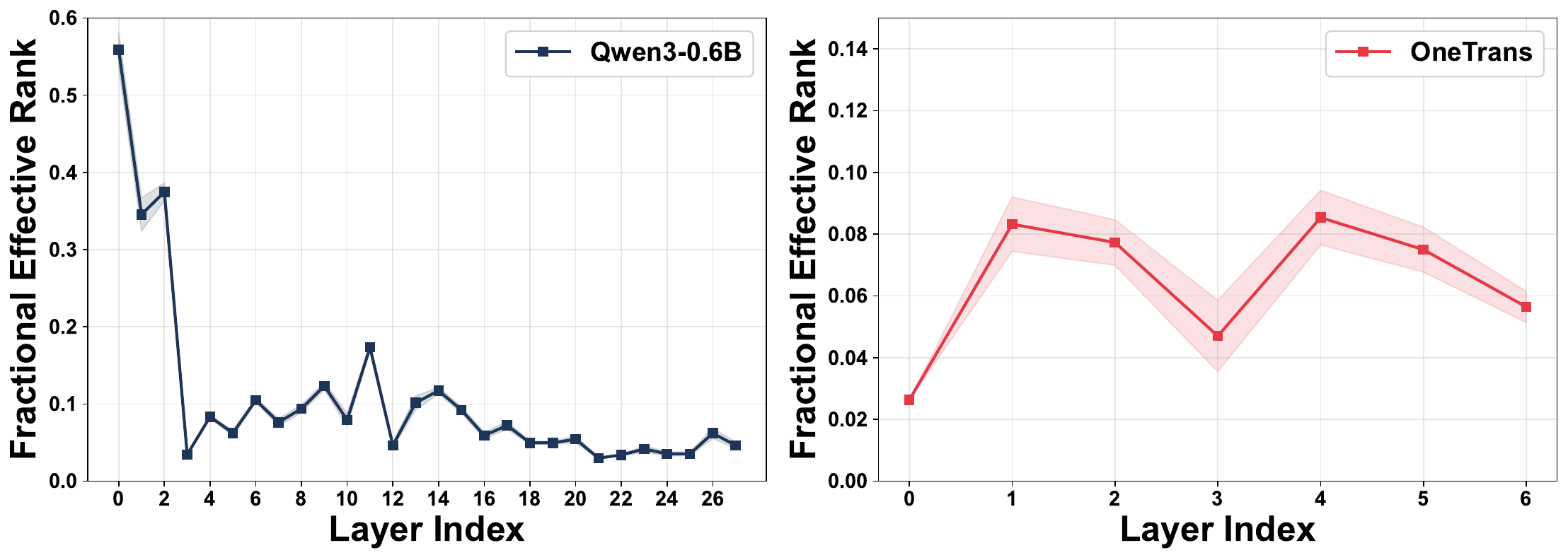}
    \caption{The comparison on fractional effective rank of attention matrix with the increase of Transformer layers. Unlike LLMs, attention matrix in recommendation suffers from serious collapse at the early Transformer layers.}
    \label{fig:erank_onetrans_LLM}
\end{figure}

\begin{figure*}[t] 
    \centering
    \begin{subfigure}{0.62\linewidth}
        \centering
        \includegraphics[width=\linewidth]{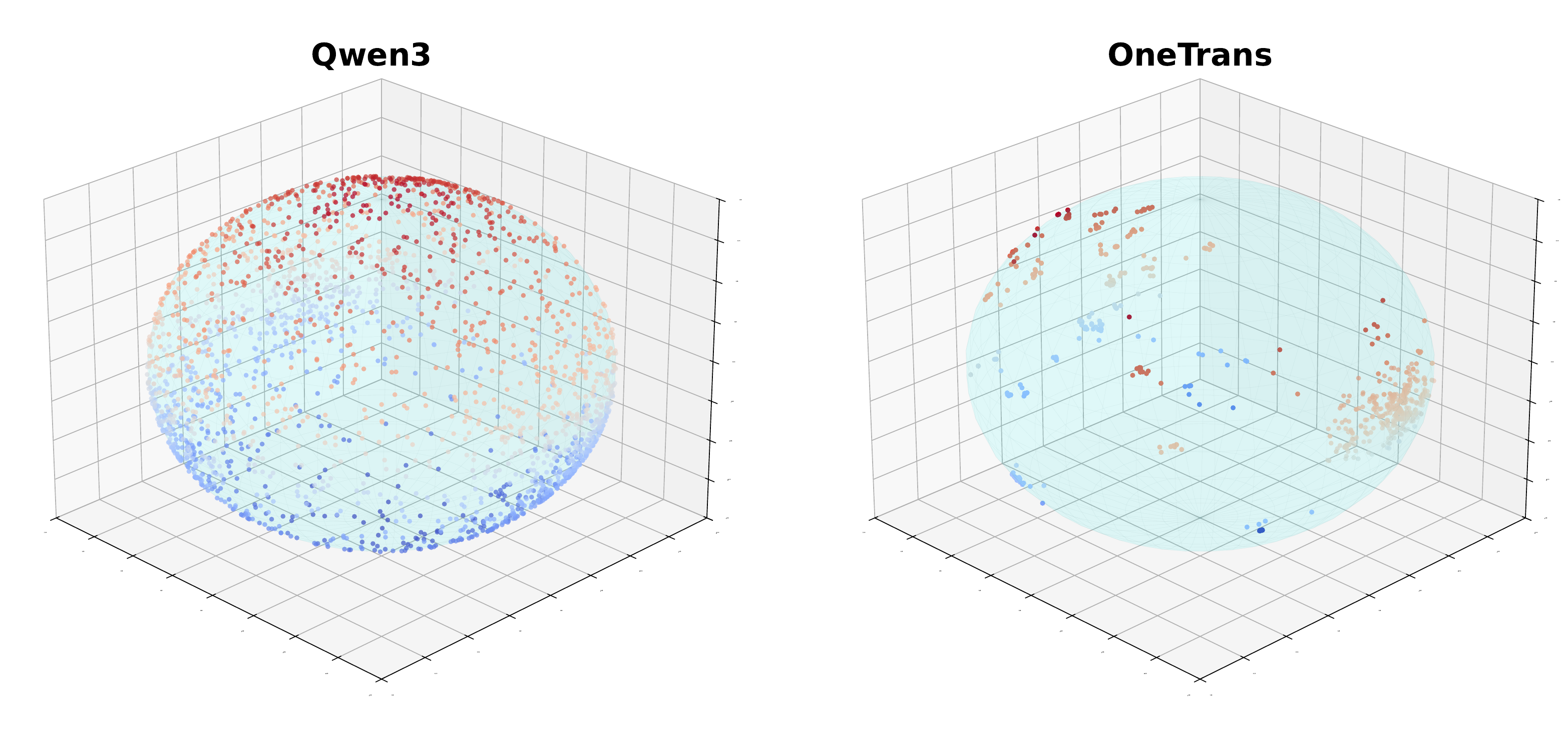}
        \caption{} 
        \label{fig:motivation_heterogeneity_a}
    \end{subfigure}
    \hfill
    \begin{subfigure}{0.28\linewidth}
        \centering
        \includegraphics[width=\linewidth]{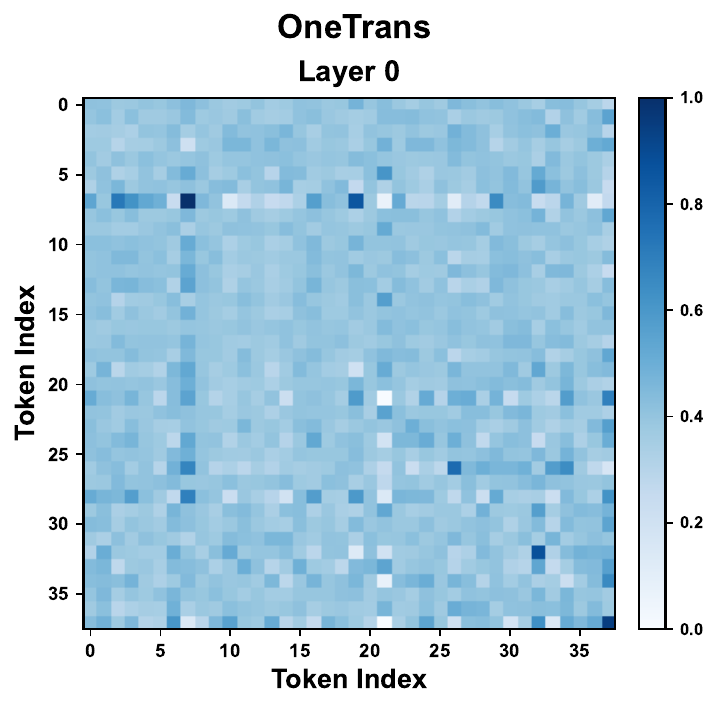}
        \caption{} 
        \label{fig:motivation_heterogeneity_b}
    \end{subfigure}
    
    \caption{(a) Visualization of the token semantic space distributions of Qwen3 and OneTrans, indicating evident heterogeneity in the recommendation semantic space; (b) Heatmap of the first-layer attention score matrix of OneTrans, displaying a highly uniform distribution. A similar phenomenon is observed across various data samples.}
    \label{fig:motivation_heterogeneity}
\end{figure*}

\begin{figure}[t] 
    \centering
    \begin{subfigure}{0.50\linewidth}
        \centering
        \includegraphics[width=\linewidth]{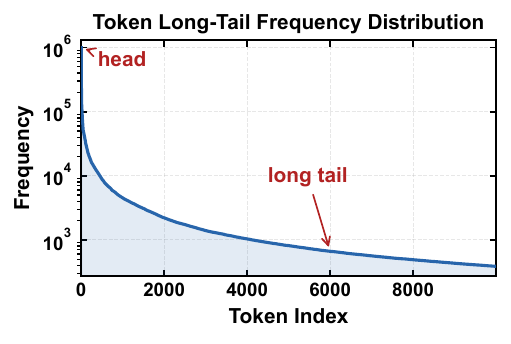}
        \caption{} 
        \label{fig:motivation_long-tail_a}
    \end{subfigure}
    \hfill
    \begin{subfigure}{0.46\linewidth}
        \centering
        \includegraphics[width=\linewidth]{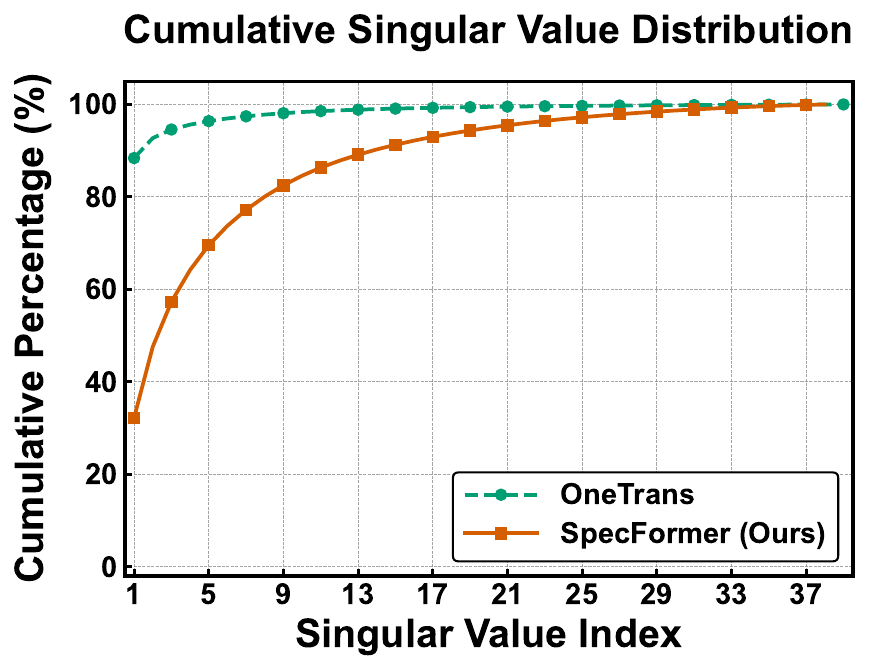}
        \caption{} 
        \label{fig:motivation_long-tail_b}
    \end{subfigure}
    
    \caption{(a) Token frequency exhibits a severe long-tail distribution in recommendation. (b) Cumulative  singular value distribution of OneTrans and our method. OneTrans suffers from severe embedding collapse where top singular values dominate, and SpecFormer effectively mitigates this collapse.}
    \label{fig:motivation_long_tail}
\end{figure}

Transformer architectures~\cite{vaswani2017attention} have achieved remarkable success across  Natural Language Processing (NLP)~\cite{devlin2019bert,brown2020language} and Computer Vision (CV)~\cite{dosovitskiy2020image,liu2021swin}, largely attributed to the expressive power of the self-attention mechanism in capturing long-range dependencies. This success has motivated extensive efforts to adapt Transformers for Recommender Systems (RS)~\cite{zhang2026onetrans,zhu2023final}. Beyond their initial applications in sequential recommendation~\cite{sun2019bert4rec,kang2018self}, Transformers have increasingly emerged as the core architecture for feature interaction modeling. By regarding diverse feature data as individual tokens, the attention mechanism can extract high-order cross-feature dependencies, which has been widely deployed in large-scale industrial recommendation applications~\cite{song2019autoint,zhu2025rankmixer}.

However, directly applying the standard self-attention mechanism to recommendation scenarios often hits a severe performance bottleneck. 
Our empirical observations (\cf Table~\ref{tab:main}) reveal that the state-of-the-art attention-based feature interaction models (\eg  OneTrans~\cite{zhang2026onetrans}) sometimes even trail behind carefully designed simple feature interaction architectures (\eg  RankMixer~\cite{zhu2025rankmixer}). More critically, unlike LLMs, which benefit significantly from depth~\cite{yang2025qwen3}, Transformers in recommendation suffer from serious attention collapse at the early Transformer layers (\cf Figure~\ref{fig:erank_onetrans_LLM}), which results in poor scaling capabilities: stacking layers often leads to severe performance degradation and a sharp drop in the effective rank of the attention matrix (\cf \Cref{sec:scaling}). This naturally raises a key question: \textbf{\textit{Why does the attention mechanism yield sub-optimal performance and collapse so quickly in recommender systems?}}

Through empirical and theoretical analysis, we reveal that it is the heterogeneity and long-tail distribution characteristics of recommendation data that catalyzes a vicious cycle of spectral collapse for both feature embeddings and attention maps in recommendation:

\textbf{(1) Inherent Data Heterogeneity and Long-Tail Distributions Induce Spectral Collapse.} 
Unlike LLMs, where text tokens share a relatively continuous and smooth semantic space, recommendation features (\eg user profiles, item IDs, and contexts) are highly heterogeneous. Our empirical analysis illustrates that recommendation token embeddings do not distribute uniformly; rather, they form highly segregated, discrete semantic sub-spaces (\cf Figure~\ref{fig:motivation_heterogeneity_a}). Consequently, when attention mechanisms attempt to model these tokens, they fail to establish smooth semantic alignments across these heterogeneous fields, leading to overly sparse attention maps (\cf Figure~\ref{fig:motivation_heterogeneity_b}). Furthermore, recommendation data inherently exhibits a severe long-tail distribution in feature frequencies, which leads to the long-tail distribution of input feature tokens (\cf Figure~\ref{fig:motivation_long-tail_a}). As demonstrated by previous work~\cite{lin2025recommendation}, this triggers a critical \textit{spectral collapse} in the representation space, where the embedding matrix is dominated by a few principal singular values, while the minor spectral components are essentially obliterated (\cf Figure~\ref{fig:motivation_long-tail_b}).

\textbf{(2) The Vicious Cycle of Collapse in Forward and Backward Propagation.} 
Most alarmingly, we theoretically demonstrate that this spectral collapse is not merely a static flaw, but triggers a vicious cycle within the Transformer architecture itself (\cf \Cref{sec:theory}).
In the forward propagation, due to the collapsed input token embeddings, the computed attention matrix degenerates into a sparse, low-rank form dominated almost entirely by the principal spectral components. As layers stack, this low-rank attention acts as a low-pass filter, aggressively smoothing out token representations.
During backward propagation, gradients flowing through this low-rank attention matrix are forcefully projected onto the dominant principal sub-spaces. Consequently, minor spectral directions suffer from gradient starvation, receiving almost no effective updates. This biased update rule further aggravates the spectral collapse in the next iteration.
This vicious cycle explains why Transformers in recommendation collapse at much shallower layers compared to the LLMs (\cf Figure~\ref{fig:erank_onetrans_LLM}). 

While several existing efforts have explored various techniques to mitigate the spectral collapse in recommendation, they still face critical limitations. Specifically, some methods attempt to introduce multi-embedding paradigms~\cite{guo2023embedding,chen2026rankup}; however, this introduces prohibitive parameter overhead, making it highly impractical for large-scale systems. Other approaches introduce extensive FFN parameters~\cite{zhu2025rankmixer} or external knowledge integration~\cite{yan2025scaling} to avoid embedding collapse but merely operate in the original spatial domain. However, they fail to fundamentally resolve the root cause of spectral collapse, which yield sub-optimal results (\cf Table~\ref{tab:main} for empirical validation). While a few recent works have explored spectral regularizations~\cite{zhang2024id} or Fourier transforms~\cite{dai2026fedin}, they are either unsuitable for feature interaction modeling or fail to improve the attention collapse. Consequently, exploring a novel Transformer architecture that intrinsically resolves the collapse bottleneck directly from a spectral perspective is of paramount importance.

To tackle these fundamental challenges, we propose \textbf{SpecFormer}, a novel \underline{\textbf{Spec}}tral-Aware Trans\underline{\textbf{former}} architecture specifically tailored for recommendation. SpecFormer breaks the vicious cycle by operating feature interactions explicitly within a dynamically softened spectral domain. Specifically, 1) we first introduce a \textbf{\textit{Learnable Spectral Softening}} module that dynamically re-weights the singular values of the input representations, yielding a more uniform spectral distribution that suppresses dominant components and revitalizes tail features. 2) Building upon this, we propose a \textbf{\textit{Spectrum-softened Attention}} mechanism that computes attention weights within this softened spectral space, forcing the model to capture spectral-aware feature interaction signals. 3) Finally, to provide explicit spectral inductive biases, we design a \textbf{\textit{Spectral Residual Position Encoding}} derived from the Taylor expansion-based mapping function of singular values for attention bias.

Extensive experiments on one industrial and two public benchmark datasets demonstrate the superiority of SpecFormer. It not only significantly outperforms state-of-the-art baselines but also exhibits exceptional scaling capabilities: unlike standard Transformers, stacking layers in SpecFormer actively improves both recommendation performance and the effective rank of attention matrices. \textbf{Notably, SpecFormer has been successfully deployed in a real-world e-commerce advertising platform. In a large-scale online A/B test, it yielded tremendous business gains: a $+1.34\%$ increase in CTR and a $+16.72\%$ boost in orders at an acceptable cost of only a $5ms$ latency overhead.}

Our main contributions are summarized as follows:
\begin{itemize}[leftmargin=*]
    \item We systematically uncover the root cause of Transformer's attention mechanism failure in recommendation. We theoretically and empirically demonstrate the vicious cycle of attention and embedding collapse caused by data heterogeneity and long-tail distributions.
    \item We propose SpecFormer, a spectral-aware Transformer that resolves the collapse bottleneck intrinsically from the attention mechanism via Learnable Spectral Softening, Spectrum-softened Attention, and Spectral Residual Position Encoding.
    \item Extensive offline experiments and online A/B testing verify that SpecFormer achieves state-of-the-art performance. Crucially, it actively increases the attention's effective rank and unlocks the depth-scaling potential of Transformers for recommendation.
\end{itemize}

\section{Preliminary}
\label{sec:Preliminary}
\begin{table}[t]
    \centering
    \caption{Notations in this paper.}
    \label{tab:notation}
    \begin{tabular}{cl}
        \toprule
        \textbf{Notation} & \textbf{Description} \\
        \midrule
        $\mathcal{F}$ & Fields of recommendation data \\
        $L$ & Number of  feature tokens \\
        $D$ & Token embedding dimension size\\
        $\mathbf{X}$ & Initial input feature token matrix \\
        $\mathbf{H}^{(l)}$ & Hidden token representation at layer $l$ \\
        $\mathbf{Q}, \mathbf{K}, \mathbf{V}_{\mathrm{val}}$ & Query, Key, and Value matrices in attention\\
        $\mathbf{W}_q, \mathbf{W}_k, \mathbf{W}_v$ & Projection matrices for Query, Key and Value \\
        $\mathbf{O}$ & The output matrix of self-attention \\
        $\mathbf{S}$ & Pre-softmax attention score matrix \\
        $r$ & Rank of the matrix\\
        $\mathbf{U}$ & Left singular matrix after SVD \\
        $\mathbf{V}$ & Right singular matrix after SVD \\
        $\mathbf{\Sigma}$ & Singular value matrix \\
        $\sigma_i$ & The $i$-th singular value \\
        ${\mathbf{\Sigma}}^*$ & Softened singular matrix \\
        $\tau, a$ & Softening factor and its learnable parameter \\
        ${\mathbf{H}}^*$ & Softened token feature matrix \\
        $\mathbf{P}_{\mathrm{bias}}$ & Learnable spectral residual position encoding bias \\
        $\mathbf{S}_{\mathrm{bias}}$ & Original-space residual bias \\
        $\beta_p$ & Taylor expansion coefficient for the $p$-th term\\
        $\alpha, \gamma$ & Learnable parameters for residual bias strength \\
        \bottomrule
    \end{tabular}
\end{table}

\subsection{Task Formulation}

This paper focuses on Click-Through Rate (CTR) prediction, a crucial and fundamental task in recommender systems~\cite{liu2015convolutional,huang2019fibinet,pan2018field, sun2021fm2}. It aims to estimate the probability that a user will click on a given item, based on a diverse set of categorical and numerical input features~\cite{guo2017deepfm,zhang2023fibinet++}. 

Let $\mathcal{F} = \{f_1, f_2, \ldots, f_F\}$ denote a set of $F$ raw input fields (\emph{e.g.}, user ID, age, item ID, temporal sequences). The CTR prediction task can be formulated as learning a predictive model $f_\theta$:
\begin{equation}
    \hat{y} = \sigma(f_\theta(\mathcal{F})),
\end{equation}
where $\hat{y}$ is the predicted click probability and $\sigma(\cdot)$ is the sigmoid function. The core of $f_\theta$ lies in capturing effective feature interactions. Over the years, a variety of architectures have been proposed for $f_\theta$, ranging from traditional FM-based models~\cite{rendle2010factorization, xiao2017attentional} and DNN-based models~\cite{cheng2016wide, guo2017deepfm} to modern Transformer-based models~\cite{song2019autoint, gui2023hiformer}. Among these, Transformer-based architectures have emerged as the mainstream paradigm due to their exceptional capability in modeling complex, high-order feature interactions~\cite{zhu2025rankmixer,zhang2026onetrans}.

\subsection{Transformer-Based Feature Interaction Model}

A standard Transformer-based feature interaction model typically consists of three sequential stages: input feature tokenization, feature interaction network, and feed-forward network for final prediction.

\textbf{1) Input Feature Tokenization.} 
Traditionally, each raw feature field is mapped to a distinct embedding and fed directly into the model~\cite{song2019autoint}. However, treating hundreds of raw features as individual inputs inevitably leads to parameter fragmentation, insufficient modeling of important features, and under-utilization of GPU computation. To address this, modern Transformer-based models adopt a feature tokenization approach~\cite{zhang2026onetrans,zhu2025rankmixer}. Specifically, the features are first grouped into semantically coherent clusters. The embeddings of these grouped features are sequentially concatenated into a single unified vector $\mathbf{e}_{\mathrm{input}} = [\mathbf{e}_1; \mathbf{e}_2; \ldots; \mathbf{e}_L]$. Subsequently, $\mathbf{e}_{\mathrm{input}}$ is partitioned and projected into an appropriate number of tokens with fixed dimensions:
\begin{equation}
    \mathbf{x}_i = \mathrm{Proj}\left(\mathbf{e}_{\mathrm{input}}[d \cdot (i - 1) : d \cdot i]\right), \quad i = 1, \ldots, L,
\end{equation}
where $d$ is the dimension size of each partitioned chunk, $L$ is the resulting token count, and the $\mathrm{Proj}(\cdot)$ is the MLP that maps the split embedding into a unified dimension $D$. The tokenized feature matrix is then defined as $\mathbf{X} = [\mathbf{x}_1; \mathbf{x}_2; \ldots; \mathbf{x}_L] \in \mathbb{R}^{L \times D}$.

\textbf{2) Feature Interaction Network.} 
Taking the tokenized matrix $\mathbf{H}^{(0)} = \mathbf{X}$ as input, the model relies on feature interaction networks to model user preferences. The representation is updated via a residual connection and Layer Normalization (LN)~\cite{ba2016layer}:
\begin{equation}
    \mathbf{Z}^{(l)} = \mathrm{LN}\!\left(\mathrm{FeatureInteraction}\!\left(\mathbf{H}^{(l)}\right) + \mathbf{H}^{(l)}\right).
\end{equation}
The $\mathrm{FeatureInteraction}(\cdot)$ module explicitly models the cross-feature relations, and there are various designs for this component. A common approach is utilizing the standard self-attention mechanism (\emph{e.g.}, OneTrans~\cite{zhang2026onetrans}). In contrast to standard attention, some recent works introduce carefully designed simplified mechanisms (\emph{e.g.}, RankMixer~\cite{zhu2025rankmixer}) to achieve efficient feature interaction without heavy projection matrices.

\textbf{3) Feed-Forward Network.} 
Following the interaction module, a Feed-Forward Network (FFN) is applied to further transform the token representations. While traditional Transformers share FFN parameters across all tokens, recent mainstream Transformer-based models often adopt a Per-token FFN (PFFN)~\cite{zhu2025rankmixer} as an advanced alternative. By applying dedicated transformations to each token, PFFN preserves the diversity of disparate semantic sub-spaces and prevents high-frequency features from drowning out long-tail signals. The token updating process is formulated as:
\begin{equation}
    \mathbf{H}^{(l+1)} = \mathrm{LN}\!\left(\mathbf{Z}^{(l)} + \mathrm{PFFN}\!\left(\mathbf{Z}^{(l)}\right)\right).
\end{equation}
After propagating through several layers, the final representation is aggregated (\emph{e.g.}, via mean pooling) and fed into a prediction head to yield the final predicted CTR $\hat{y}$.

\subsection{Embedding Collapse and Attention Collapse}
\label{sec:collapse_def}

To formally quantify the collapse phenomena in Transformer-based recommendation, we first introduce the Singular Value Decomposition (SVD)~\cite{hoecker1996svd}. For a matrix $\mathbf{H} \in \mathbb{R}^{L \times D}$ with rank $r = \min(L, D)$, the SVD can be expressed as:
\begin{equation}
\mathbf{H} = \mathbf{U} \mathbf{\Sigma} \mathbf{V}^T,
\end{equation}
where $\mathbf{U} \in \mathbb{R}^{L \times r}$ and $\mathbf{V} \in \mathbb{R}^{D \times r}$ are orthonormal matrices whose columns are the left and right singular vectors, respectively, and $\mathbf{\Sigma} = \mathrm{diag}(\sigma_1, \sigma_2, \ldots, \sigma_r)$ with $\sigma_1 \geq \sigma_2 \geq \cdots \geq \sigma_r \geq 0$ is the singular value matrix. 

\textbf{1) Embedding Collapse.} 
Embedding collapse refers to the phenomenon where the input token representations degenerate into a narrow, low-dimensional subspace~\cite{chen2024towards}. To evaluate this, we apply SVD on the token embedding matrix $\mathbf{H} \in \mathbb{R}^{L \times D}$. In the typical CTR setting, $L < D$, meaning the maximum possible rank is $r = L$. Following previous work~\cite{cui2026spectran}, we use the \textit{cumulative singular value ratio} to quantify the degree of embedding dimension collapse:
\begin{equation}
    \rho(k) = \frac{\sum_{i=1}^{k} \sigma_i^2}{\sum_{i=1}^{r} \sigma_i^2}, \quad k = 1, \ldots, r.
\end{equation}
A rapid saturation of $\rho(k)$ toward almost $100\%$ at a small singular value index $k$ indicates severe embedding collapse, meaning the feature space is dominated by only a few principal components.

\textbf{2) Attention Collapse.} 
Attention collapse occurs when the attention map loses its discriminative capability, degenerating into a smoothed or overly sparse pattern. To measure this, we apply SVD on the attention score matrix $\mathbf{S} \in \mathbb{R}^{L \times L}$ (where $r=L$). We quantify this degeneration using the \textit{Effective Rank}~\cite{roy2007effective}, which calculates the Shannon entropy~\cite{shannon1948mathematical} of the singular value distribution. Effective Rank is also widely used in LLMs to evaluate attention collapse~\cite{vaswani2017attention,noci2022signal,wei2024diff}. First, we compute the $L_1$-normalized singular values $p_i$ as:
\begin{equation}
    p_i = \frac{\sigma_i}{\sum_{j=1}^{r} \sigma_j}, \quad i = 1, \ldots, r,
\end{equation}
where $\sigma_i$ is the $i$-th singular value of $\mathbf{S}$. Then, the Effective Rank (erank) is defined as the exponential of the Shannon entropy of this distribution:
\begin{equation}
    \mathrm{erank}(\mathbf{S}) = \exp \left( - \sum_{i=1}^{r} p_i \ln p_i \right).
\end{equation}
The value of $\mathrm{erank}(\mathbf{S})$ ranges continuously from $1$ (a completely collapsed rank-1 matrix) to $r$ (a perfectly isotropic full-rank matrix).  Furthermore, to provide a scale-invariant metric that facilitates fair comparisons across different sequence lengths or model dimensions, we use the \textit{Fractional Effective Rank}~\cite{wang2021dcn,chen2026rankup} (ferank) by normalizing the Effective Rank by the maximum possible rank $r$:
\begin{equation}
    \mathrm{ferank}(\mathbf{S}) = \frac{\mathrm{erank}(\mathbf{S})}{r}.
\end{equation}
The $\mathrm{ferank}(\mathbf{S})$ inherently bounds the metric within the range $[0, 1]$. A fractional effective rank approaching $0$ serves as a standardized indicator of severe attention collapse, whereas a value of $1$ signifies an optimal, perfectly full-rank attention map regardless of the absolute matrix size.

\section{Theoretical Analysis}
\label{sec:theory}

In contrast to LLMs where token embeddings typically exhibit rich semantic variance, recommendation systems heavily rely on sparse recommendation data with severe power-law distributions (\cf Figure~\ref{fig:motivation_heterogeneity} and ~\ref{fig:motivation_long_tail}). This unique characteristic often leads to severe spectral collapse right at the input embedding layer (\cf Figure~\ref{fig:erank_onetrans_LLM}). In this section, we formally demonstrate that this inherent embedding collapse in recommendation scenarios triggers a vicious cycle of degradation through both the forward and backward propagation of standard attention mechanism. For mathematical simplicity, we omit the standard scaling factor $1/\sqrt{D}$.

\begin{proposition}[Forward Propagation: Severe Recommendation Embedding Collapse Induces Rapid Attention Collapse]
\label{prop:forward_collapse}
Let the input feature matrix be $\mathbf{H} \in \mathbb{R}^{L \times D}$ with its SVD denoted as $\mathbf{H} = \mathbf{U\Sigma V}^\top$, where $\mathbf{U}$ and $\mathbf{V}$ are orthogonal matrices, and $\mathbf{\Sigma} = \mathrm{diag}(\sigma_1, \ldots, \sigma_r)$. Given the severe spectral collapse premise in recommendation scenarios, assume $\sigma_k \leq \epsilon \sigma_1$ for all $k \geq 2$ ($0 < \epsilon \ll 1$). Then, the attention map degenerates and is almost entirely dominated by the principal spectral component, losing feature discriminability.
\end{proposition}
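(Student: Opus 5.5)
We will make the informal conclusion precise by decomposing the pre-softmax score matrix $\mathbf{S} = \mathbf{H}\mathbf{W}_q\mathbf{W}_k^\top\mathbf{H}^\top$ along the spectrum of $\mathbf{H}$. Substituting the SVD gives
\begin{equation*}
\mathbf{S} = \mathbf{U}\mathbf{\Sigma}\mathbf{M}\mathbf{\Sigma}\mathbf{U}^\top, \qquad \mathbf{M} := \mathbf{V}^\top\mathbf{W}_q\mathbf{W}_k^\top\mathbf{V} \in \mathbb{R}^{r\times r},
\end{equation*}
so that entrywise $\mathbf{S} = \sum_{i,j} \sigma_i\sigma_j M_{ij}\,\mathbf{u}_i\mathbf{u}_j^\top$. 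We then split this sum into the principal term $\sigma_1^2 M_{11}\mathbf{u}_1\mathbf{u}_1^\top$, the cross terms with exactly one index equal to $1$, and the remainder with $i,j\ge 2$. Using $\sigma_k\le\epsilon\sigma_1$, orthonormality of the $\mathbf{u}_i$, and $|M_{ij}|\le\|\mathbf{W}_q\mathbf{W}_k^\top\|_2 =: c$, we aim for the bound
\begin{equation*}
\bigl\|\mathbf{S} - \sigma_1^2 M_{11}\mathbf{u}_1\mathbf{u}_1^\top\bigr\|_F \le c\,\sigma_1^2\bigl(2\epsilon\sqrt{r-1} + \epsilon^2 (r-1)\bigr) = O(\epsilon)\,c\,\sigma_1^2 .
\end{equation*}
A cleaner route to the same bound is to write $\mathbf{H} = \sigma_1\mathbf{u}_1\mathbf{v}_1^\top + \mathbf{E}$ with $\|\mathbf{E}\|_2\le\epsilon\sigma_1$ and expand the bilinear form. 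This gives the operator-norm bound $\|\mathbf{S}-\mathbf{S}_1\|_2\le (2\epsilon+\epsilon^2)c\sigma_1^2$, which avoids the $\sqrt{r}$ factors, and we will use it in place of the Frobenius estimate.

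Next we transfer this to the post-softmax attention $\mathbf{A}=\mathrm{softmax}(\mathbf{S})$. Row $a$ of $\mathbf{S}_1$ equals $\sigma_1^2 M_{11} (\mathbf{u}_1)_a\,\mathbf{u}_1^\top$, so every row of the principal attention map is a softmax of a scalar multiple of the single vector $\mathbf{u}_1$. Consequently all rows of $\mathrm{softmax}(\mathbf{S}_1)$ lie in a one-parameter family determined only by the principal direction. Since softmax is $1$-Lipschitz from $\ell_\infty$ to $\ell_1$ (its Jacobian is $\mathrm{diag}(p)-pp^\top$), each row satisfies
\begin{equation*}
\|\mathbf{A}_{a,:}-\mathrm{softmax}(\mathbf{S}_1)_{a,:}\|_1 \le 2\max_b |(\mathbf{S}-\mathbf{S}_1)_{ab}| \le 2(2\epsilon+\epsilon^2)c\sigma_1^2 .
\end{equation*}
Hence the attention is determined by $\mathbf{u}_1$ up to $O(\epsilon)$, and tokens with similar coordinates in $\mathbf{u}_1$ receive nearly identical attention distributions; this is the claimed loss of discriminability.

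To connect with the effective-rank language of \Cref{sec:collapse_def}, we also note that $\mathbf{S}$ is a rank-one matrix plus a perturbation of relative size $O(\epsilon)$. By Weyl's inequality, $\sigma_1(\mathbf{S})\ge |M_{11}|\sigma_1^2 - O(\epsilon)c\sigma_1^2$ and $\sigma_k(\mathbf{S})\le O(\epsilon)c\sigma_1^2$ for $k\ge2$. The normalized singular values $p_k$ are therefore $O(\epsilon)$ for $k\ge 2$, so $\mathrm{erank}(\mathbf{S})\to1$ and $\mathrm{ferank}(\mathbf{S})\to 1/r$ as $\epsilon\to0$.

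The main obstacle is the prefactor. All perturbation bounds scale with $c\,\sigma_1^2$, not with $|M_{11}|\sigma_1^2$, so dominance genuinely requires a nondegeneracy assumption such as $|M_{11}|\ge \kappa c$ for some $\kappa>0$. Without it, $\mathbf{W}_q\mathbf{W}_k^\top$ could annihilate the principal direction, which would be a pathological case. We plan to state this assumption explicitly, noting that it holds generically for random or trained projections, and then take $\epsilon\ll\kappa$.

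A related subtlety is that the softmax bound is absolute rather than relative. If $\sigma_1^2 c$ is large, the absolute error $\epsilon c\sigma_1^2$ need not be small. We will handle this by phrasing the softmax conclusion in terms of logit differences: the logits are dominated by $\mathbf{S}_1$ with relative error $O(\epsilon/\kappa)$. For the quantitative softmax statement we will add the mild assumption that $\epsilon c\sigma_1^2 = o(1)$, for example after LayerNorm bounds the scale of $\mathbf{H}$.
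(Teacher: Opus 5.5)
Your proposal is correct and follows the same route as the paper: substitute the SVD, introduce the core matrix $\mathbf{M}=\mathbf{V}^\top\mathbf{W}_q\mathbf{W}_k^\top\mathbf{V}$, isolate the rank-one term $\sigma_1^2M_{11}\mathbf{u}_1\mathbf{u}_1^\top$, bound the remainder by $\mathcal{O}(\epsilon\sigma_1^2)$ using boundedness of $\mathbf{M}$, and pass through the softmax. Your additional pieces make explicit what the paper leaves implicit when it treats $\mathbf{M}$ as bounded and asserts that the softmax amplifies the dominance: the operator-norm bound (which avoids the $r$-dependence hidden in the paper's term-by-term estimate), the Lipschitz and Weyl steps, the nondegeneracy condition $|M_{11}|\ge\kappa c$, and the scale condition $\epsilon c\sigma_1^2=o(1)$.
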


\begin{proof}
The pre-softmax attention score matrix is computed as $\mathbf{S} = \mathbf{H}\mathbf{W}_q\mathbf{W}_k^\top\mathbf{H}^\top$. By substituting the SVD of $\mathbf{H}$ and defining a projection core matrix $\mathbf{M} = \mathbf{V}^\top \mathbf{W}_q\mathbf{W}_k^\top \mathbf{V} \in \mathbb{R}^{r \times r}$, we can express $\mathbf{S}$ as:
\begin{equation}
    \mathbf{S} = \mathbf{U} \mathbf{\Sigma} (\mathbf{V}^\top \mathbf{W}_q\mathbf{W}_k^\top \mathbf{V}) \mathbf{\Sigma} \mathbf{U}^\top = \mathbf{U} \mathbf{\Sigma} \mathbf{M} \mathbf{\Sigma} \mathbf{U}^\top.
\end{equation}
The $(i,j)$-th entry of $\mathbf{S}$ can be explicitly expanded and decomposed into:
\begin{equation}
    \begin{aligned}
    S_{ij} &= \sum_{a=1}^r \sum_{b=1}^r u_{ia} \sigma_a M_{ab} \sigma_b u_{jb} \\
           &= \underbrace{\sigma_1^2 M_{11} u_{i1} u_{j1}}_{\text{Dominant Component}} + \underbrace{\sum_{(a,b) \neq (1,1)} \sigma_a \sigma_b M_{ab} u_{ia} u_{jb}}_{\text{Minor Component} R_{ij}}.
    \end{aligned}
\end{equation}
Since $\mathbf{U}$ and $\mathbf{V}$ are orthogonal matrices ($|u_{ij}| \le 1$), and the learned weight matrices $\mathbf{W}_q, \mathbf{W}_k$ inherently have bounded norms due to standard regularization, their product $\mathbf{M} = \mathbf{V}^\top \mathbf{W}_q\mathbf{W}_k^\top \mathbf{V}$ is bounded at a relatively small scale. Consequently, the magnitude of each term is strictly governed by the singular values $\sigma_a \sigma_b$. Under the embedding collapse condition ($\sigma_k \leq \epsilon \sigma_1$ for $k \geq 2$), the residual term is bounded by $|R_{ij}| \leq \mathcal{O}(\epsilon \sigma_1^2)$. Therefore, the pre-softmax score is heavily dominated by the rank-1 principal component. Worse still, when passed through the exponential function $\mathbf{A} = \mathrm{softmax}(\mathbf{S})$, this dominance is exponentially amplified. The attention matrix $\mathbf{A}$ loses sensitivity to minor feature interactions and rapidly collapses into a low-rank matrix governed primarily by $\mathbf{u}_1$.
\end{proof}

\begin{proposition}[Backward Propagation: Gradient Concentration on Dominant Components Amplifies Collapse]
\label{prop:backward_collapse}
Given the smoothed and degenerate attention matrix $\mathbf{A}$ derived from Proposition~\ref{prop:forward_collapse}, the gradients during backpropagation are heavily concentrated on the dominant spectral component, while minor components receive negligible updates, rendering the model incapable of recovering lost feature diversity.
\end{proposition}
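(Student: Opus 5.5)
The plan is to compute the gradients of the loss $\mathcal{L}$ with respect to the input representation $\mathbf{H}$ and the projection weights $\mathbf{W}_q,\mathbf{W}_k,\mathbf{W}_v$. I will then express each gradient in the spectral basis $(\mathbf{U},\mathbf{V})$ of $\mathbf{H}$ inherited from Proposition~\ref{prop:forward_collapse}. The attention output is $\mathbf{O}=\mathbf{A}\mathbf{H}\mathbf{W}_v$ with $\mathbf{A}=\mathrm{softmax}(\mathbf{S})$ and $\mathbf{S}=\mathbf{H}\mathbf{W}_q\mathbf{W}_k^\top\mathbf{H}^\top$. Write $\mathbf{G}=\partial\mathcal{L}/\partial\mathbf{O}$. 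The chain rule gives two pieces.

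\textbf{Value path.} The first piece is
\[
\frac{\partial\mathcal{L}}{\partial\mathbf{H}}\Big|_{\mathrm{val}}=\mathbf{A}^\top\mathbf{G}\mathbf{W}_v^\top .
\]

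\textbf{Score path.} Let $\mathbf{G}_S=\partial\mathcal{L}/\partial\mathbf{S}$, obtained from $\partial\mathcal{L}/\partial\mathbf{A}=\mathbf{G}\mathbf{W}_v^\top\mathbf{H}^\top$ through the softmax Jacobian $\mathrm{diag}(\mathbf{a}_i)-\mathbf{a}_i\mathbf{a}_i^\top$ applied row-wise. The second piece is
\[
\frac{\partial\mathcal{L}}{\partial\mathbf{H}}\Big|_{\mathrm{score}}=\mathbf{G}_S\mathbf{H}\mathbf{W}_k\mathbf{W}_q^\top+\mathbf{G}_S^\top\mathbf{H}\mathbf{W}_q\mathbf{W}_k^\top .
\]

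\textbf{Spectral decomposition of each term.} The key step is to project each gradient onto the spectral components, e.g.\ $\mathbf{u}_a^\top(\partial\mathcal{L}/\partial\mathbf{H})\mathbf{v}_b$, and bound the non-principal components.
\begin{itemize}
\item For the value path, Proposition~\ref{prop:forward_collapse} gives $\mathbf{A}\approx\mathbf{A}_1+\mathbf{E}$, where $\mathbf{A}_1$ is rank-one and governed by $\mathbf{u}_1$ (row-stochastic, so $\mathbf{A}_1\approx\mathbf{1}\boldsymbol{\pi}^\top$ with $\boldsymbol{\pi}$ a function of $\mathbf{u}_1$) and $\|\mathbf{E}\|=\mathcal{O}(\epsilon)$. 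Hence $\mathbf{A}^\top\mathbf{G}$ lies, up to $\mathcal{O}(\epsilon)$, in a one-dimensional column space.
\item For the score path, every term contains a factor $\mathbf{H}=\sum_a\sigma_a\mathbf{u}_a\mathbf{v}_a^\top$. Terms with index $a\ge 2$ carry $\sigma_a\le\epsilon\sigma_1$ and are therefore $\mathcal{O}(\epsilon)$ relative to the $a=1$ term.
\item The same argument applies to the weight gradients, e.g.\ $\partial\mathcal{L}/\partial\mathbf{W}_q=\mathbf{H}^\top\mathbf{G}_S\mathbf{H}\mathbf{W}_k$. This expands as $\sum_{a,b}\sigma_a\sigma_b\mathbf{v}_a(\mathbf{u}_a^\top\mathbf{G}_S\mathbf{u}_b)\mathbf{v}_b^\top\mathbf{W}_k$, so it is dominated by the $\mathbf{v}_1\mathbf{v}_1^\top$ block with relative error $\mathcal{O}(\epsilon)$ for mixed terms and $\mathcal{O}(\epsilon^2)$ for minor–minor terms.
\end{itemize}
As in the forward proof, I will assume bounded $\|\mathbf{W}\|$ and $\|\mathbf{G}\|$, and use that $\mathbf{U},\mathbf{V}$ have orthonormal columns to control the coefficients.

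\textbf{Main obstacle: the softmax Jacobian.} The weak point is controlling $\mathbf{G}_S$. If $\mathbf{A}$ is close to uniform or one-hot, the Jacobian rows nearly vanish. The first-order form of this is $\mathbf{J}_i\mathbf{1}=0$ together with a small variance of $\mathbf{a}_i$. So I must show that the surviving part of $\mathbf{G}_S$ is again aligned with $\mathbf{u}_1$ and is not amplified along minor directions. My first attempt is a direct bound. Since $\|\mathrm{diag}(\mathbf{a})-\mathbf{a}\mathbf{a}^\top\|\le 1$, and $\partial\mathcal{L}/\partial\mathbf{A}$ already contains $\mathbf{H}^\top$, which is $\mathbf{u}_1$-dominated, $\mathbf{G}_S$ inherits at most bounded mixing. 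The minor directions then still pick up a factor $\sigma_a\le\epsilon\sigma_1$ from the outer $\mathbf{H}$ in the score-path formula.

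\textbf{Closing the loop.} A gradient step updates $\mathbf{H}$ (through the embeddings and tokenizer) and $\mathbf{W}_q,\mathbf{W}_k$ mostly along $\mathbf{u}_1\mathbf{v}_1^\top$. To first order in the learning rate, this means $\sigma_1$ grows at rate $\Theta(1)$ while $\sigma_k$ for $k\ge 2$ change at rate $\mathcal{O}(\epsilon)$. The ratio $\sigma_k/\sigma_1$ therefore does not increase, and the premise of Proposition~\ref{prop:forward_collapse} persists or sharpens at the next iteration, which gives the claimed amplification of collapse.
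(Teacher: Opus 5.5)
\textbf{Verdict: genuine gap.} Your derivatives are correct. On the value path your skeleton is the paper's: $\mathbf A^\top\mathbf G$, with $\mathbf A=\mathrm{softmax}(\mathbf S^1)$ plus a softmax-Lipschitz error. Your score-path and weight-gradient expansion is an explicit version of the paper's parenthetical remark about right-multiplication by $\mathbf H$. What is missing is the step that turns this into concentration on the \emph{dominant spectral component}.

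The $i$-th row of $\mathbf A_1=\mathrm{softmax}(\sigma_1^2M_{11}\mathbf u_1\mathbf u_1^\top)$ is $\mathrm{softmax}(\sigma_1^2M_{11}u_{i1}\mathbf u_1)$, whose temperature depends on the row. So $\mathbf A_1\approx\mathbf 1\boldsymbol\pi^\top$ holds only in special regimes (nearly constant $\mathbf u_1$, negligible logits, or saturation); otherwise $\mathbf A_1$ is generically full rank. Also, your error term should be $\mathcal O(\epsilon\sigma_1^2\|\mathbf M\|)$, not $\mathcal O(\epsilon)$. More importantly, even granting $\mathbf A_1=\mathbf 1\boldsymbol\pi^\top$, you only learn that $\mathbf A^\top\mathbf G$ has column space $\mathrm{span}(\boldsymbol\pi)$. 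Its minor components are $\mathbf u_k^\top\mathbf A_1^\top\mathbf G=(\boldsymbol\pi^\top\mathbf u_k)\,\mathbf 1^\top\mathbf G$, which vanish only if $\boldsymbol\pi\perp\mathbf u_k$. Low rank is not alignment with $\mathbf u_1$. Saturated rows peaking at $j^\ast=\arg\max_j u_{j1}$ give $\boldsymbol\pi\approx\mathbf e_{j^\ast}$, and small logits give $\boldsymbol\pi\approx\mathbf 1/L$; neither is parallel to $\mathbf u_1$ in general. You need an extra hypothesis such as $\mathbf u_1\approx\mathbf 1/\sqrt L$ (collapse onto a shared mean direction) together with near-uniform attention. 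In that regime $\mathbf S^1$ has constant rows and is invisible to the softmax, so near-uniformity must come from $\epsilon\sigma_1^2\|\mathbf M\|\ll 1$. The paper does not derive this either: it asserts $\xi_k\ll\xi_1$ from the low-pass-filter citation. In your rank-one picture $\mathbf A_1^\top\mathbf u_k=\boldsymbol\pi\,(\mathbf 1^\top\mathbf u_k)$, so that assertion is really the alignment condition $\mathbf 1^\top\mathbf u_k\approx 0$ for $k\ge 2$. Since you are deriving rather than citing, you must state it.

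The score path and the loop-closing step have the same kind of problem. The $a=1$ term of $\mathbf G_S\mathbf H\mathbf W_k\mathbf W_q^\top$ is $\sigma_1(\mathbf G_S\mathbf u_1)(\mathbf W_q\mathbf W_k^\top\mathbf v_1)^\top$. It is rank one, but its factors are not $\mathbf u_1,\mathbf v_1$, so its projections $\mathbf u_a^\top(\cdot)\mathbf v_b$ with $(a,b)\neq(1,1)$ are not $\mathcal O(\epsilon)$. Worse, every softmax-Jacobian row annihilates $\mathbf 1$, so $\mathbf G_S\mathbf 1=\mathbf 0$. In the regime $\mathbf u_1\approx\mathbf 1/\sqrt L$ that the value path needs, $\mathbf G_S\mathbf u_1\approx\mathbf 0$. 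Then your ``leading'' term vanishes, and so does the $\mathbf v_1\mathbf v_1^\top$ coefficient $\sigma_1^2\,\mathbf u_1^\top\mathbf G_S\mathbf u_1$ of $\partial\mathcal L/\partial\mathbf W_q$. The upper bound $\|\mathrm{diag}(\mathbf a)-\mathbf a\mathbf a^\top\|\le 1$ can never supply the lower bound a \emph{relative} statement needs. The right observation there is that the whole score path is suppressed: the dominant part $\sigma_1(\mathbf G\mathbf W_v^\top\mathbf v_1)\mathbf u_1^\top$ of $\partial\mathcal L/\partial\mathbf A$ has rows that are constant vectors, and the Jacobian kills them.

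Finally, $\dot\sigma_1=\Theta(1)$ and $\dot\sigma_k=\mathcal O(\epsilon)$ do not imply that $\sigma_k/\sigma_1$ is non-increasing. We have $\frac{d}{dt}(\sigma_k/\sigma_1)=\frac{\sigma_k}{\sigma_1}(\dot\sigma_k/\sigma_k-\dot\sigma_1/\sigma_1)$, and the premise only gives $\sigma_k\le\epsilon\sigma_1$. Taking $\sigma_k=\epsilon^2\sigma_1$ and $\dot\sigma_k=\epsilon$ makes the ratio increase. You would need a relative bound $|\mathbf u_k^\top\nabla_{\mathbf H}\mathcal L\,\mathbf v_k|\lesssim(\sigma_k/\sigma_1)\,|\mathbf u_1^\top\nabla_{\mathbf H}\mathcal L\,\mathbf v_1|$ plus a sign for $\dot\sigma_1$. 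Gradient descent does not supply that sign, and the LayerNorms rescale anyway. The paper does not attempt this dynamic step; it stops at the bound $\xi_k\|\mathbf g_k\|+\mathcal O(\epsilon\sigma_1^2)$ on the transmitted minor components.
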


\begin{proof}
Consider the backward propagation through the value aggregation phase $\mathbf{O} = \mathbf{A}\mathbf{X}$ (where $\mathbf{X} = \mathbf{H}\mathbf{W}_v$). (While gradients also backpropagate through the query and key projections, those updates inherently involve right-multiplications with the highly rank-deficient input $\mathbf{H}$, meaning their gradients are also strictly confined to the dominant spectral subspace~\cite{dong2021attention}). Focusing on the gradient path through the value aggregation, by the chain rule, the gradient with respect to the input representations $\mathbf{X}$ is:
\begin{equation}
    \nabla_{\mathbf{X}} \mathcal{L} = \mathbf{A}^\top \nabla_{\mathbf{O}} \mathcal{L}.
\end{equation}
From Proposition~\ref{prop:forward_collapse}, the pre-softmax score matrix can be expressed as $\mathbf{S} = \mathbf{S}^{1} + \mathbf{R}$, where $\mathbf{S}^{1} = \sigma_1^2 M_{11} \mathbf{u}_1 \mathbf{u}_1^\top$ is the rank-1 dominant component and the residual satisfies $\|\mathbf{R}\|_\infty = \mathcal{O}(\epsilon \sigma_1^2)$. Let $\mathbf{A}^{1} = \mathrm{softmax}(\mathbf{S}^{1})$. Due to the Lipschitz continuity of the softmax function~\cite{gao2017properties}, the actual attention matrix satisfies $\mathbf{A} = \mathbf{A}^{1} + \mathbf{A}^{E}$, where the error matrix $\mathbf{A}^{E}$ is bounded by $\|\mathbf{A}^{E}\| \leq \mathcal{O}(\epsilon\sigma_1^2)$.

To see how this affects the gradient flow, we project the upstream gradient $\nabla_{\mathbf{O}} \mathcal{L}$ onto the orthogonal basis $\mathbf{U}$:
\begin{equation}
    \nabla_{\mathbf{O}} \mathcal{L} = \mathbf{u}_1 \mathbf{g}_1^\top + \sum_{k=2}^r \mathbf{u}_k \mathbf{g}_k^\top,
\end{equation}
where $\mathbf{g}_k$ represents the gradient vector corresponding to the $k$-th spectral component. The backward update can then be expanded as:
\begin{equation}
    \begin{aligned}
    \nabla_{\mathbf{X}} \mathcal{L} &= (\mathbf{A}^{1} + \mathbf{A}^{E})^\top \left( \mathbf{u}_1 \mathbf{g}_1^\top + \sum_{k=2}^r \mathbf{u}_k \mathbf{g}_k^\top \right).
    \end{aligned}
\end{equation}
As proven by the previous work~\cite{dong2021attention}, the self-attention matrix acts intrinsically as a low-pass filter. Because $\mathbf{A}^{1}$ is generated by the latent structure of $\mathbf{u}_1$, its rows exhibit minimal variance and lack high-frequency complexity. Consequently, when operating on the orthogonal basis $\mathbf{U}$, the transformation $\mathbf{A}^{1\top}$ strongly preserves the primary structural component $\mathbf{u}_1$, but severely attenuates the high-frequency minor components $\mathbf{u}_k$ ($k \geq 2$). 

Let $\xi_k = \|\mathbf{A}^{1\top} \mathbf{u}_k\| / \|\mathbf{u}_k\|$ denote the attenuation factor, which quantifies the fraction of the $k$-th spectral component's magnitude preserved during backpropagation. For minor spectral components, the low-pass filtering property dictates that $\xi_k \ll \xi_1$. Thus, the gradient update projected onto the $k$-th minor component ($k \geq 2$) is bounded by:
\begin{equation}
    \| \nabla_{\mathbf{X}} \mathcal{L} \cdot \mathbf{u}_k \| \leq \xi_k \|\mathbf{g}_k\| + \mathcal{O}(\epsilon\sigma_1^2).
\end{equation}
Since the attenuation factor $\xi_k$ is exceedingly small and the residual perturbation is bounded by $\mathcal{O}(\epsilon\sigma_1^2)$, the gradient is overwhelmingly concentrated on the dominant component $\mathbf{u}_1$. The minor components receive highly attenuated updates, which exacerbates the embedding collapse and prevents the model from escaping this degenerate state.
\end{proof}

\textbf{The Vicious Cycle of Collapse in Recommendation.}
Propositions~\ref{prop:forward_collapse} and~\ref{prop:backward_collapse} collectively reveal a closed-loop vicious cycle: initially collapsed embeddings in recommendation generate smoothed, low-rank attention maps (Forward Propagation). These degenerate attention maps, acting as low-pass filters, strictly restrict the gradient updates to the dominant spectral directions (Backward Propagation). This pushes the embeddings to become even more collapsed in the next iteration. This fundamental mechanism mathematically explains why standard self-attention suffers from attention collapse much faster and more severely in recommendation scenarios compared to LLMs.

\section{Methodology}
\label{sec:method}

\begin{figure*}[t]
    \centering
    \includegraphics[width=\linewidth]{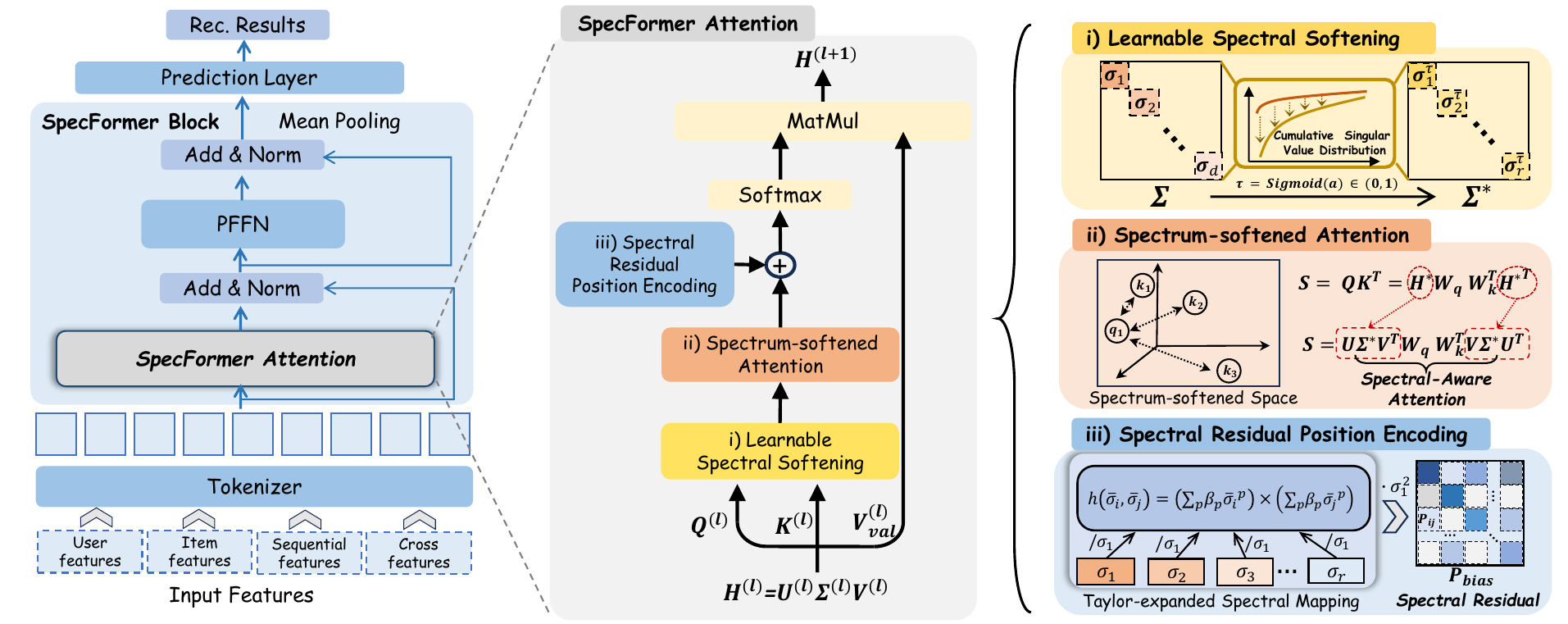}
    \caption{The architecture of  SpecFormer. Each SpecFormer attention layer consists of three components: (1) Learnable Spectral Softening applied to the input feature matrix; (2) Spectrum-softened Attention operating in the spectral-aware feature interaction space; and (3) Spectral Residual Position Encoding providing a learnable Taylor-expanded spectral bias term.}
    \label{fig:overview}
\end{figure*}

To address the embedding collapse and attention collapse in recommendation, we propose \textbf{SpecFormer}, a Spectral-Aware Transformer tailored for recommendation. As illustrated in Figure~\ref{fig:overview}, each SpecFormer layer replaces the standard self-attention module with three tightly coupled spectral components while keeping the remaining Transformer components unchanged. We describe each component in detail below.

\subsection{Learnable Spectral Softening}
\label{sec:softening}

The root cause of embedding collapse is the skewed singular value distribution of the feature embedding matrix, where a few dominant singular values capture nearly all the representational information. We directly address this by applying a learnable spectral softening to the input at each layer before attention computation.

Formally, let $\mathbf{H}^{(l)} \in \mathbb{R}^{L \times D}$ denote the input to the $l$-th Transformer layer, with $\mathbf{H}^{(0)} = \mathbf{X}$. We compute the SVD as:
\begin{equation}
    \mathbf{H}^{(l)} = \mathbf{U}^{(l)} \mathbf{\Sigma}^{(l)} {\mathbf{V}^{(l)}}^\top,
\end{equation}
where $\mathbf{U}^{(l)} \in \mathbb{R}^{L \times r}$, $\mathbf{V}^{(l)} \in \mathbb{R}^{D \times r}$, $r = \min(L, D)$, and $\mathbf{\Sigma}^{(l)} = \mathrm{diag}(\sigma_1^{(l)}, \ldots, \sigma_r^{(l)})$ with $\sigma_1^{(l)} \geq \cdots \geq \sigma_r^{(l)} \geq 0$.

We then apply a \textit{power-law softening} to the singular value spectrum:
\begin{equation}{\mathbf{\Sigma}^{(l)}}^* = \mathrm{diag}\!\left({\sigma_1^{(l)}}^{\tau},\, {\sigma_2^{(l)}}^{\tau},\, \ldots,\, {\sigma_r^{(l)}}^{\tau}\right),\label{eq:softening}
\end{equation}
where $\tau = \mathrm{Sigmoid}(a) \in (0, 1)$ and $a$ is a learnable scalar initialized to $0$ (\ie $\tau$ starts at $0.5$). Since $\tau < 1$, the power-law transformation compresses large singular values more aggressively than small ones, effectively flattening the spectrum and redistributing representational energy across all dimensions. The softened feature matrix is then reconstructed as:
\begin{equation}{\mathbf{H}^{(l)}}^* = \mathbf{U}^{(l)} {\mathbf{\Sigma}^{(l)}}^* {\mathbf{V}^{(l)}}^\top.
    \label{eq:softened_H}
\end{equation}

The power-law softening provides a continuous, data-adaptive control over the degree of spectral flattening via the learnable parameter $\tau$.

\subsection{Spectrum-softened Attention}
\label{sec:spec_attn}

Standard self-attention computes feature interactions directly in the original feature space, which causes attention to be dominated by the collapsed principal component. We instead propose to compute attention weights in the \textit{spectrum-softened domain} with the softened embeddings, so that all spectral components contribute meaningfully to feature interactions.

Specifically, we define the Query and Key matrices using the softened embedding ${\mathbf{H}^{(l)}}^*$, while keeping the Value matrix in the original feature space:
\begin{equation}
    \mathbf{Q}^{(l)} = {\mathbf{H}^{(l)}}^* \mathbf{W}_q^{(l)}, \quad
    \mathbf{K}^{(l)} = {\mathbf{H}^{(l)}}^* \mathbf{W}_k^{(l)}, \quad
    \mathbf{V}_{\mathrm{val}}^{(l)} = \mathbf{H}^{(l)},
    \label{eq:qkv}
\end{equation}
where $\mathbf{W}_q^{(l)}, \mathbf{W}_k^{(l)} \in \mathbb{R}^{D \times D}$. Using the original $\mathbf{H}^{(l)}$ (without softening or projection) as the Value matrix is motivated empirically: the Value matrix aggregates feature content, and preserving the original representation avoids information loss in the output. We also compare it with other Value matrix designs and find it performs best (\cf Section~\ref{sec:ablation}).

The pre-softmax attention score matrix is:
\begin{equation}
    \mathbf{S}^{(l)} = \mathbf{Q}^{(l)}{\mathbf{K}^{(l)}}^\top = {\mathbf{H}^{(l)}}^* \mathbf{W}_q^{(l)} {\mathbf{W}_k^{(l)}}^\top {\mathbf{H}^{(l)}}^{*\top}.
    \label{eq:score}
\end{equation}

To reveal the spectral nature of this computation, we substitute Eq.~\eqref{eq:softened_H} into Eq.~\eqref{eq:score} and drop the layer superscript for clarity:
\begin{equation}
    \mathbf{S} = \mathbf{U}\underbrace{\left(\mathbf{\Sigma}^* \mathbf{V}^\top \mathbf{W}_q \mathbf{W}_k^\top \mathbf{V} \mathbf{\Sigma}^*\right)}_{\text{spectrum-softened attention} }\mathbf{U}^\top.
    \label{eq:spectral_score}
\end{equation}
This design naturally prevents a single spectral component from dominating the attention computation, as the softened spectrum $\mathbf{\Sigma}^*$ ensures a more balanced contribution from all components.


\subsection{Spectral Residual Position Encoding}
\label{sec:spec_residual}

In addition to the spectrum-softened attention, we further introduce a \textit{learnable spectral residual position encoding} to provide an explicit spectral inductive bias for feature interactions and prevent collapse from compounding across layers.

\textbf{Spectral Residual Term.}
We define a spectral position encoding bias $\mathbf{P}_{\mathrm{bias}}^{(l)} \in \mathbb{R}^{L \times L}$ as:
\begin{equation}
    \mathbf{P}_{\mathrm{bias}}^{(l)} = \mathbf{U}^{(l)} \mathbf{P}^{(l)} {\mathbf{U}^{(l)}}^\top,
    \label{eq:spec_bias}
\end{equation}
where $\mathbf{P}^{(l)} \in \mathbb{R}^{r \times r}$ is a learnable spectral interaction weight matrix. To parameterize $\mathbf{P}^{(l)}$ in a structured and data-adaptive manner, we define its $(i,j)$-th entry via a \textit{Taylor-expanded pairwise spectral mapping}:
\begin{align}
    \bar{\sigma}_i &= \sigma_i / \sigma_1, \label{eq:norm_sv} \\
    h(\bar{\sigma}_i, \bar{\sigma}_j) &= \left(\sum_{p=0}^{2} \beta_p \bar{\sigma}_i^p\right)\times\left(\sum_{p=0}^{2} \beta_p \bar{\sigma}_j^p\right), \label{eq:taylor} \\
    P_{ij}^{(l)} &= {\sigma}_1^2\, h(\bar{\sigma}_i, \bar{\sigma}_j), \label{eq:A_entry}
\end{align}
where $\bar{\sigma}_i$ is the normalized singular value, $h(\bar{\sigma}_i, \bar{\sigma}_j)$ is a second-order polynomial that models the pairwise spectral interaction between the $i$-th and $j$-th feature components, and $\boldsymbol{\beta} = [\beta_0, \beta_1, \beta_2]^\top$ are shared learnable parameters. 

Crucially, this specific parameterization provides a vital structural inductive bias to compensate for the original spectral weights. Because the spectral softening (Section~\ref{sec:softening}) effectively flattens the representation space to revitalize tail features, it might inadvertently penalize the highly informative dominant components. By explicitly anchoring the interactions on the maximum singular value $\sigma_1^2$ and the normalized spectrum, this bias term acts as a targeted structural residual. It preserves the heavy-weighted principal spectral signals, preventing the loss of main recommendation patterns in the uniformized space.

\textbf{Spatial Residual Term.}
To further stabilize training and preserve the feature interaction signal from the original space, we add a second residual term computed from the un-softened embeddings:
\begin{equation}
    \mathbf{S}_{\mathrm{bias}}^{(l)} = \left( \mathbf{H}^{(l)} \tilde{\mathbf{W}}_q^{(l)} \right) \left( \mathbf{H}^{(l)} \tilde{\mathbf{W}}_k^{(l)} \right)^{\top},
    \label{eq:orig_bias}
\end{equation}

where $\tilde{\mathbf{W}}_q^{(l)}$ and $\tilde{\mathbf{W}}_k^{(l)}$ are independent linear projections dedicated to the residual connection, ensuring the gradient flows of the original and softened spaces are independent.

\textbf{Final SpecFormer Attention.}
Combining the spectral attention score (Eq.~\ref{eq:score}), the spectral bias (Eq.~\ref{eq:spec_bias}), and the spatial residual (Eq.~\ref{eq:orig_bias}), the complete SpecFormer attention is:
\begin{equation}
    \mathrm{SpecAtt}(\mathbf{H}^{(l)}) = \phi\!\left(\mathbf{S}^{(l)} + \alpha\, \mathbf{P}_{\mathrm{bias}}^{(l)} + \gamma\, \mathbf{S}_{\mathrm{bias}}^{(l)}\right) \mathbf{V}_{\mathrm{val}}^{(l)},
    \label{eq:specatt}
\end{equation}
where $\phi(\cdot) = \mathrm{softmax}(\cdot)$ is the activation function, and $\alpha, \gamma \in \mathbb{R}$ are hyperparameters that control the strength of each residual term.

The remaining components follow the standard Transformer design. The full update rule for the $l$-th SpecFormer layer is:
\begin{align}
    \mathbf{Z}^{(l)} &= \mathrm{LN}\!\left(\mathrm{SpecAtt}(\mathbf{H}^{(l)}) + \mathbf{H}^{(l)}\right), \label{eq:ln1} \\
    \mathbf{H}^{(l+1)} &= \mathrm{LN}\!\left(\mathbf{Z}^{(l)} + \mathrm{PFFN}(\mathbf{Z}^{(l)})\right), \label{eq:ln2}
\end{align}
where $\mathrm{LN}(\cdot)$ is Layer Normalization~\cite{ba2016layer} and $\mathrm{PFFN}(\cdot)$ is the pre-token FFN introduced in \Cref{sec:Preliminary}. The final representation is mean-pooled over the feature dimension and passed to a prediction MLP.

\section{Experiments}
\label{sec:exp}

In this section, we conduct extensive experiments to evaluate the proposed SpecFormer. Specifically, we aim to answer the following research questions:
\begin{itemize}[leftmargin=*]
    \item \textbf{RQ1:} How does SpecFormer perform compared to existing state-of-the-art traditional, Transformer-based, and spectral-based recommendation models?
    \item \textbf{RQ2:} How do the key components of SpecFormer contribute to the overall performance?
    \item \textbf{RQ3:} Does SpecFormer mitigate the collapse bottleneck and achieve scaling capability on stacking Transformer layers?
    \item \textbf{RQ4:} What are the impacts of different hyperparameters in SpecFormer?
    \item \textbf{RQ5:} How does SpecFormer perform in a real-world, large-scale online industrial system in terms of system efficiency and business metrics?
\end{itemize}

\subsection{Experimental Setup}
\label{sec:expSetup}
\subsubsection{Datasets}
Following previous work~\cite{guo2023embedding,zhu2025rankmixer}, we evaluate SpecFormer on the following datasets: one industrial dataset and two well-known public benchmarks --- Criteo\footnote{\url{https://www.kaggle.com/c/criteo-display-ad-challenge/data}}, and Avazu\footnote{\url{https://www.kaggle.com/c/avazu-ctr-prediction/data}}.
Table~\ref{tab:datasets} summarizes the statistics of each dataset.
\begin{itemize}
    \item \textbf{Industrial Dataset} is collected from the online display advertising system of a leading international e-commerce platform. It contains approximately 1.2 billion exposure samples with user behavior sequences up to a length of 256, containing 38 feature fields. 
    
    \item \textbf{Criteo} is a well-known public benchmark for CTR prediction. This dataset contains one week of real-world advertisement click data. It features a mix of 13 continuous (numerical) features and 26 categorical features.
    
    \item \textbf{Avazu} is a widely-used benchmark consisting of 10 days of chronologically ordered ad click logs. Each sample contains 23 feature fields, providing a robust environment for testing on rich time-ordered categorical data.

\end{itemize}

\begin{table}[t]
    \centering
    \caption{Statistics of the datasets used in experiments.}
    \label{tab:datasets}
    \begin{tabular}{lcccc}
        \toprule
        \textbf{Dataset} & \textbf{\#Samples} & \textbf{\#Fields} & \textbf{\#Features} & \textbf{\#Positive} \\
        \midrule
        Industrial & 1.2B  & 391 & $>$1B      & 2.7\% \\
        Criteo     & 45M   & 39  & 1.33M    & 26.0\% \\
        Avazu      & 40M   & 23  & 1.54M    & 17.0\% \\
        \bottomrule
    \end{tabular}
\end{table}

\subsubsection{Baselines}
We compare SpecFormer against three categories of baselines:
\begin{itemize}[leftmargin=*]
  \item \textbf{Traditional Feature Interaction Models:} 
  \textbf{DeepFM (arXiv'17)}~\cite{guo2017deepfm} integrates factorization machines and deep neural networks to jointly model both low-order and high-order feature interactions. 
  \textbf{AutoInt (CIKM'19)}~\cite{song2019autoint} utilizes a multi-head self-attention to explicitly map features into a projected space for automated high-order interaction learning. 
  \textbf{MaskNet (arXiv'21)}~\cite{wang2021masknet} introduces an instance-guided mask module to dynamically highlight salient features and capture complex non-linear feature interactions. 
  \textbf{FiBiNet++ (CIKM'23)}~\cite{zhang2023fibinet++} enhances the FiBiNet~\cite{huang2019fibinet} by refining the bilinear interaction process and attention mechanisms for more efficient representation learning. 
  \textbf{GDCN (CIKM'23)}~\cite{wang2023towards} proposes a gated cross network that captures explicit high-order feature interactions while utilizing gating mechanisms to dynamically filter out noise.

  \item \textbf{Transformer-Based Feature Interaction Models:} 
  \textbf{HiFormer (arXiv'23)}~\cite{gui2023hiformer} designs a hierarchical Transformer architecture to effectively capture both fine-grained intra-feature and coarse-grained inter-feature relationships. 
  \textbf{FAT (arXiv'25)}~\cite{yan2025scaling} scales up feature interaction learning by adapting a comprehensive Transformer architecture to model complex, long-range cross-feature dependencies. 
  \textbf{RankMixer (CIKM'25)}~\cite{zhu2025rankmixer} integrates a specialized token-mixing attention mechanism tailored for ranking tasks to efficiently aggregate and transform feature interactions. 
  \textbf{OneTrans (WWW'26)}~\cite{zhang2026onetrans} proposes a Transformer framework to seamlessly integrate various feature interaction paradigms into a single unified attention mechanism.

  \item \textbf{Spectral-Based Recommendation Models:} 
  \textbf{WhitenRec (ICDE'24)}~\cite{zhang2024id} applies a spectral-based whitening transformation to alleviate dimensional collapse. 
  \textbf{FEDIN (arXiv'26)}~\cite{dai2026fedin} leverages Fourier Transform to optimize feature representations in the frequency domain and model global interaction patterns.
\end{itemize}

\subsubsection{Evaluation Metrics}
Following previous work~\cite{zhu2025rankmixer,zhang2026onetrans}, We employ the widely used \textbf{AUC} (Area Under the ROC Curve) and \textbf{Logloss} (Cross-Entropy Loss) as standard evaluation metrics. For the industrial dataset, we additionally report \textbf{GAUC} (Group AUC), which measures user-specific ranking quality and is highly correlated with online performance. Note that a \textbf{0.001} improvement in AUC/GAUC is considered significant and can lead to massive revenue growth in industrial applications~\cite{zhou2018deep,song2019autoint}.

\subsubsection{Implementation Details}
We implement all models in TensorFlow and conduct experiments on servers equipped with NVIDIA L20 GPUs. For the industrial dataset, the token embedding dimension is set to 304, optimized via Adagrad~\cite{duchi2011adaptive} with a batch size of 1024. For public datasets (Criteo, and Avazu), the embedding dimensions are configured as 312, and 176, respectively, using a batch size of 512. The optimal learning rate for each model is determined through a grid search within $\{0.005, 0.01, 0.05, 0.1\}$ for the industrial dataset and a broader range of $\{0.001, 0.002, 0.005, 0.01, 0.05, 0.1\}$ for public datasets. 
Our code is available at {\url{https://github.com/istarryn/SpecFormer}}.

\subsubsection{Training Strategy and Warm-up Protocol} 
To ensure numerical stability and model convergence, we adopt a two-stage training protocol for \text{SpecFormer}, whereas the baseline models follow a standard single-stage procedure: 
(a) \textit{Baseline Configuration:} Following the common practice in CTR prediction, all baseline models are initialized using a standard Gaussian distribution and trained on the full training set (100\%) to optimize the target CTR loss.
(b) \textit{SpecFormer Configuration:} We employ a two-stage training scheme. In the first stage (Embedding Warm-up), 5\% of the training data is utilized to initialize the embeddings by optimizing the \textit{Spatial Residual Term}. In the second stage (Main CTR Training), the model is trained on the remaining 95\% of the data using the complete architecture, including the Spectral Softening module.

The warm-up phase is essential for stabilizing the \textit{Spectral Softening} mechanism. Randomly initialized embeddings often exhibit ill-conditioned singular value distributions dominated by noise, making direct spectral transformations prone to numerical instability. By pre-training with the \textit{Spatial Residual Term}, we project embeddings onto a structured feature manifold. This establishes a stable spectral foundation, enabling robust frequency-domain refinement during the subsequent CTR training stage.

\subsection{Performance Comparison (RQ1)}
\label{sec:main_results}

Table~\ref{tab:main} shows the performance comparison of the proposed SpecFormer against the baseline methods. We observe that:
\begin{table*}[t]
    \centering
    \caption{Performance comparisons of SpecFormer with different representative baselines on three real-world datasets. A higher AUC/GAUC and a lower Logloss indicate better performance. The best and the second performances are \textbf{bolded} and \underline{underlined}, respectively. }
    \label{tab:main}
    \resizebox{\linewidth}{!}{%
    \begin{tabular}{ccccccccc}
        \toprule
        \multirow{2}{*}{\textbf{Category}} & \multirow{2}{*}{\textbf{Model}} 
        & \multicolumn{3}{c}{\textbf{Industrial}} & \multicolumn{2}{c}{\textbf{Criteo}} & \multicolumn{2}{c}{\textbf{Avazu}} \\
        \cmidrule(lr){3-5}\cmidrule(lr){6-7}\cmidrule(lr){8-9}
        & & AUC ↑& GAUC ↑& Logloss ↓& AUC ↑& Logloss ↓& AUC ↑& Logloss ↓\\
        \midrule
        \multirow{5}{*}{Traditional}
        & Wide\&Deep     & 0.7373 & 0.6287 & 0.1144 & 0.6589 & 0.5372 & 0.6364 & 0.4335 \\
        & DeepFM         & 0.7378 & 0.6282 & 0.1143 & 0.5988 & 0.5421 & 0.7404 & 0.3965 \\
        & GDCN           & 0.7491 & 0.6426 & 0.1133 & 0.6727 & 0.5268 & 0.7370 & 0.3989 \\
        & AutoInt        & 0.7483 & 0.6405 & 0.1136 & 0.7022 & 0.5153 & 0.7439 & 0.3952 \\
        & FiBiNet++      & 0.7508 & 0.6437 & 0.1133 & 0.7485 & 0.4868 & 0.7410 & 0.3965 \\
        \midrule
        \multirow{5}{*}{Transformer-based}
        & HiFormer       & 0.7497 & 0.6417 & 0.1134 & \underline{0.7570} & \underline{0.4811} & 0.7248 & 0.4842 \\
        & FAT            & 0.7559 & 0.6490 & \underline{0.1126} & 0.7551 & 0.4828 & 0.7440 & 0.3955 \\
        & RankMixer      & \underline{0.7587} & \underline{0.6526} & 0.1127 & 0.7484 & 0.4859 & 0.7450 & 0.3951 \\
        & OneTrans       & 0.7555 & 0.6497 & 0.1127 & 0.7558 & 0.4823 & \underline{0.7461} & 0.3943 \\        
        \midrule
        \multirow{2}{*}{Spectral-based}
        & WhitenRec      & 0.7481 & 0.6425 & 0.1139 & 0.7472 & 0.4888 & 0.7484 & \underline{0.3932} \\
        & FEDIN          & 0.6945 & 0.5675 & 0.1189 & 0.6172 & 0.9568 & 0.6737 & 0.4256 \\
        \midrule
        & \textbf{SpecFormer} & \textbf{0.7611} & \textbf{0.6537} & \textbf{0.1119} & \textbf{0.7587} & \textbf{0.4791} & \textbf{0.7574} & \textbf{0.3902} \\
        \bottomrule
    \end{tabular}}
\end{table*}

\textbf{1) Overall performance comparisons.} 
SpecFormer consistently and significantly outperforms all baseline models across the evaluated datasets in terms of AUC, GAUC, and Logloss. These impressive results validate the effectiveness and generalizability of our proposed spectral-aware architecture. By fundamentally addressing the spectral collapse bottleneck, SpecFormer proves to be a highly effective and robust feature interaction backbone for real-world recommender systems.

\textbf{2) Compared with Traditional Feature Interaction Models.} 
SpecFormer significantly outperforms traditional recommendation architectures. While these traditional models capture low- and high-order feature interactions through well-designed structures, they inherently lack the context-aware modeling capabilities to model complex dependencies. Furthermore, although models like AutoInt attempt to employ early self-attention, they operate directly on raw feature embeddings, making them highly susceptible to parameter fragmentation~\cite{zhu2025rankmixer}. In contrast, SpecFormer utilizes a modernized tokenization paradigm and operates within the spectral attention, unlocking a much larger capacity for expressive feature interactions.

\textbf{3) Compared with Transformer-based baselines.}
While modern Transformer-based models generally outperform traditional architectures, they still hit a noticeable performance ceiling. Notably, models adopting the standard self-attention mechanism (\eg OneTrans) sometimes even underperform well-designed but simpler token-mixing methods (\eg RankMixer). This empirical evidence strongly corroborates our theoretical analysis in \Cref{sec:theory}: when faced with heterogeneous recommendation data, standard attention mechanisms cause the attention maps to be dominated by a few principal singular values. In contrast, SpecFormer surpasses these strong state-of-the-art baselines by modeling feature interactions within a dynamically softened spectral space. The spectral-aware design effectively revitalizes tail features, forcing the model to capture a more balanced and diverse set of spectral signals, thereby breaking the bottleneck caused by the standard attention mechanism.

 \textbf{4) Compared with Spectral-based baselines.}
SpecFormer exhibits substantial superiority over the spectral-based baselines. Specifically, WhitenRec adjusts the spectrum using static SVD and spectral constraints, but normal spectral transformations are insufficient for complex feature interactions. Furthermore, FEDIN performs poorly across the datasets, confirming that simple frequency-domain transformations struggle to capture non-linear attention-based feature interactions. Conversely, SpecFormer intrinsically integrates spectral-aware feature interaction and mitigates the core spectral collapse in attention mechanism, achieving better performance.

\subsection{Ablation Study (RQ2)}
\label{sec:ablation}

We perform the following ablation study to investigate the effects of each component in SpecFormer:

\begin{itemize}[leftmargin=*]
  \item \textbf{1) Learnable Spectral Softening Module.} 
(a) w/o Spectral Softening ($\mathbf{\Sigma}^* = \mathbf{\Sigma}$): We remove the learnable power-law softening and directly use the original singular values to reconstruct the space.

 \item \textbf{2) Value Matrix Design.} 
(b) Value Matrix w/ Projection ($\mathbf{V}_{\mathrm{val}}=\mathbf{H}\mathbf{W}_v$): We apply a standard linear projection to the Value matrix.
(c) Value Matrix w/ Softening ($\mathbf{V}_{\mathrm{val}}=\mathbf{H}^*$): We use the spectrum-softened embeddings as the Value matrix instead of the original design.

 \item \textbf{3) Spectral and Spatial Residuals.} 
(d) w/o Spectral Position Encoding ($\alpha=0$): We remove the Taylor expansion-based spectral residual position encoding bias.
(e) Spectral Position Encoding $\rightarrow$ Identity Matrix ($\mathbf{P}=\textbf{I}$): We replace the learnable Taylor expansion-based spectral mapping matrix with an identity matrix.
(f) w/o Spatial Residual ($\gamma=0$): We remove the spatial residual term $\mathbf{S}_{\mathrm{bias}}$.

 \item \textbf{4) Spectrum-softened Attention Mechanism.} 
(g) w/o Spectrum-softened Attention $\mathbf{S}$ (Only Residuals): We remove the core dynamic spectrum-softened attention matrix $\mathbf{S}$, leaving only the structural residual biases.
\end{itemize}
As shown in Table~\ref{tab:ablation}, we can observe that:

1) Removing the \textit{Learnable Spectral Softening} leads to a catastrophic performance decline, particularly on the most important AUC metric. This directly validates our core motivation (\cf \Cref{sec:theory}): without dynamically flattening the singular value distribution, the attention mechanism immediately collapses into the dominant principal components, losing its ability to capture long-tail feature interactions.

2) Regarding the \textit{Value Matrix Design}, both (b) and (c) yield severely sub-optimal performance. Applying a standard projection $\mathbf{W}_v$ to the original collapsed input $\mathbf{H}$ is unnecessary and leads to sub-optimal performance. Similarly, using the softened representation $\mathbf{H}^*$ as the Value matrix artificially distorts the semantics of the embedding and might cause information loss. This exploration empirically justifies our design in \Cref{sec:spec_attn}: utilizing the original $\mathbf{H}$ as the Value matrix effectively preserves the raw information, while the softened Query and Key matrices model spectral-aware feature interactions.

3) The ablation on \textit{Residual Position Encodings} demonstrates the necessity of preserving original spectral and spatial signals. Removing the Spectral Position Encoding causes a noticeable drop, proving that while spectral softening revitalizes tail features, it inevitably penalizes principal spectral patterns. The Spectral Position Encoding is crucial as a structural residual to compensate for these essential signals. Furthermore, replacing it with an Identity Matrix results in one of the most severe performance drops. A completely uniform distribution fails to anchor the interactions to the maximum spectrum ($\sigma_1^2$), making it hard for the model to differentiate the varying importance of distinct spectral components. Finally, removing the Spatial Residual slightly degrades performance, highlighting its role in stabilizing training from the original spatial perspective.

4) Removing the \textit{Spectrum-softened Attention} $\mathbf{S}$ leads to a significant decline across all datasets. This indicates that while the residual biases provide excellent structural priors, they cannot replace the dynamic and data-dependent feature interaction capabilities of the  attention mechanism. SpecFormer achieves state-of-the-art performance exclusively through the synergistic combination of the core spectrum-softened attention mechanism and the structured residuals.

\begin{table*}[t]
    \centering
    \caption{Ablation study of SpecFormer. }
    \label{tab:ablation}
    \resizebox{\textwidth}{!}{
    \begin{tabular}{lccccccc}
        \toprule
        \multirow{2}{*}{\textbf{Method}} & \multicolumn{3}{c}{\textbf{Industrial}} & \multicolumn{2}{c}{\textbf{Criteo}} & \multicolumn{2}{c}{\textbf{Avazu}} \\
        \cmidrule(lr){2-4} \cmidrule(lr){5-6} \cmidrule(lr){7-8}
        & AUC $\uparrow$ & GAUC $\uparrow$ & Logloss $\downarrow$ & AUC $\uparrow$ & Logloss $\downarrow$ & AUC $\uparrow$ & Logloss $\downarrow$ \\ 
        \midrule
        \textbf{SpecFormer} & \textbf{0.7611} & 0.6537 & 0.1119 & \textbf{0.7587} & \textbf{0.4791} & \textbf{0.7574} & \textbf{0.3902} \\
        \midrule
        (a) w/o Spectral Softening ($\mathbf{\Sigma}^* = \mathbf{\Sigma}$) & 0.7576 & 0.6541 & 0.1100 & 0.6555 & 0.5332 & 0.7525 & 0.3914 \\
        \midrule    
        (b) Value Matrix w/ Projection ($\mathbf{V}_{\mathrm{val}}=\mathbf{H}\mathbf{W}_v$)                        & 0.7555 & 0.6526 & 0.1103 & 0.6593 & 0.5351 & 0.5645 & 0.4476 \\
        (c) Value Matrix w/ Softening ($\mathbf{V}_{\mathrm{val}}=\mathbf{H}^*$)                                & 0.7547 & 0.6518 & 0.1100 & 0.6472 & 0.5346 & 0.5643 & 0.4479 \\      
        \midrule
        (d) w/o Spectral Position Encoding ($\alpha=0$)                                   & 0.7588 & \textbf{0.6555} & \textbf{0.1099} & 0.6598 & 0.5299 & 0.6700 & 0.4169 \\
        (e) Spectral Position Encoding $\rightarrow$ Identity Matrix ($\textbf{P}=\textbf{I}$)                      & 0.7573 & 0.6548 & 0.1101 & 0.7536 & 0.4832 & 0.5667 & 0.4469 \\
        (f) w/o Spatial Residual ($\gamma=0$)                                     & 0.7551 & 0.6527 & 0.1106 & 0.7555 & 0.4812 & 0.6733 & 0.4203 \\
        \midrule
        (g) w/o Spectrum-softened Attention $\mathbf{S}$ (Only Residuals) & 0.7582 & 0.6551 & 0.1100 & 0.7569 & 0.4812 & 0.7382 & 0.3980 \\
        \bottomrule
    \end{tabular}}
\end{table*}

\subsection{Scaling Study (RQ3)}
\label{sec:scaling}
To evaluate the scalability of \text{SpecFormer} against other Transformer-based models on the depth of Transformer layers, we conducted a comparative study against the representative models \text{RankMixer} and \text{OneTrans} by scaling the model depth from 3 to 5, 7, and 9 layers on the Industrial dataset. We analyzed the performance variations through two key dimensions: the incremental AUC gains ($\Delta$ AUC) and the layer-wise fractional effective rank of the attention matrices. As shown in Figure~\ref{fig:deep_analysis}, we observe that:

1) As the number of Transformer layers increases, the AUC of SpecFormer and RankMixer consistently grows, whereas OneTrans exhibits a severe rise-then-fall performance degradation. This  corroborates our theoretical analysis in \Cref{sec:theory}: OneTrans, relying on standard self-attention, inevitably falls into the vicious cycle of attention collapse. Conversely, SpecFormer fundamentally breaks this bottleneck, demonstrating the capacity to continuously scale up on model depth for substantial recommendation performance gains.

2) While both SpecFormer and RankMixer show the scaling capacity, SpecFormer exhibits a steeper AUC growth curve and consistently outperforms RankMixer across 3, 5, 7, and 9 layers. This observation proves that while carefully hand-crafted token-mixing strategies in RankMixer can avoid the attention collapse by simplifying interactions, they inherently lack the expressive power of complex feature modeling. SpecFormer, equipped with the spectral-aware attention mechanism, not only mitigates the collapse but also unleashes the potential of Transformers in recommendation, modeling fine-grained and high-order feature interactions far more effectively than static token-mixing strategies.

3) As model depth increases, the fractional effective rank of SpecFormer's attention matrix exhibits a steady upward trend. In sharp contrast, the attention effective rank of OneTrans experiences violent fluctuations in early layers and irreversible collapse at deeper layers. This provides the empirical proof for our method: through the spectral-aware attention design, SpecFormer effectively breaks the loop of original attention collapse. It ensures that even at deep depths, the attention map retains a high effective rank, successfully capturing feature interactions across diverse spectral components.

\begin{figure}[t]
    \centering
    \includegraphics[width=1\linewidth]{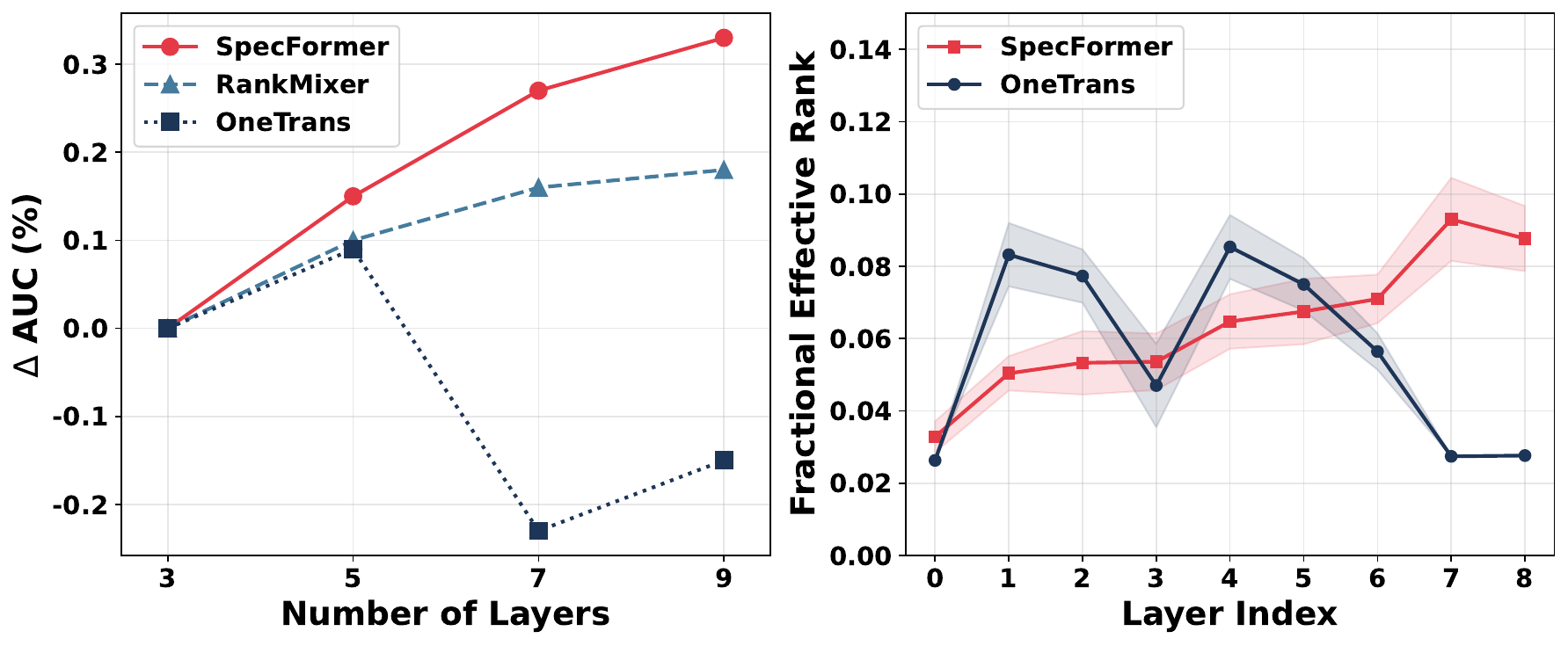}
    \caption{Comparison of AUC gains and attention effective ranks across Transformer layer depth on RankMixer, OneTrans and SpecFormer.
}
    \label{fig:deep_analysis}
\end{figure}

\subsection{Hyperparameter Sensitivity (RQ4)} 
Figure~\ref{fig:hyperpara} showcases the AUC and Logloss performance of the proposed SpecFormer on the Industrial dataset under different spectral and spatial residual position encoding weights (\ie $\alpha$ and $\gamma$). We observe that: 1) As $\alpha$ increases from $0.0001$ to $0.001$ and $0.01$, the recommendation performance initially improves and subsequently declines, generally peaking around $0.001$. Similarly, as $\gamma$ scales across $\{0.5, 0.8, 1.0\}$, the performance exhibits a clear rise-then-fall pattern, achieving optimal results at approximately $0.8$.
This indicates that an appropriate weight of spectral and spatial bias is essential to compensate for the original spectral and attention signals. However, an excessively large $\alpha$ or $\gamma$ will overwhelm the dynamically learned spectrum-softened attention and degrade recommendation performance.
2) Notably, the optimal value of $\alpha$ ($0.001$) is orders of magnitude smaller than that of $\gamma$ ($0.8$). The reasons are: (a) The spatial residual $\mathbf{S}_{\mathrm{bias}}^{(l)}$ is calculated via standard dot-product attention, placing its numerical scale close to the primary spectrum-softened attention matrix $\mathbf{S}^{(l)}$. Consequently, $\gamma$ naturally operates at a scale around $0.8$. (b) The spectral residual term $\mathbf{P}_{\mathrm{bias}}^{(l)}$ is explicitly parameterized by the polynomial expansion of singular values and anchored by  $\sigma_1^2$. Its numerical values are intrinsically much larger than the standard pre-softmax attention scores. Therefore, a strictly small $\alpha$ is required to regularize the magnitude, preventing gradient explosion and ensuring stable training. 

\label{sec:hyperparameters}
\begin{figure}[t]
    \centering
    \includegraphics[width=1.00\linewidth]{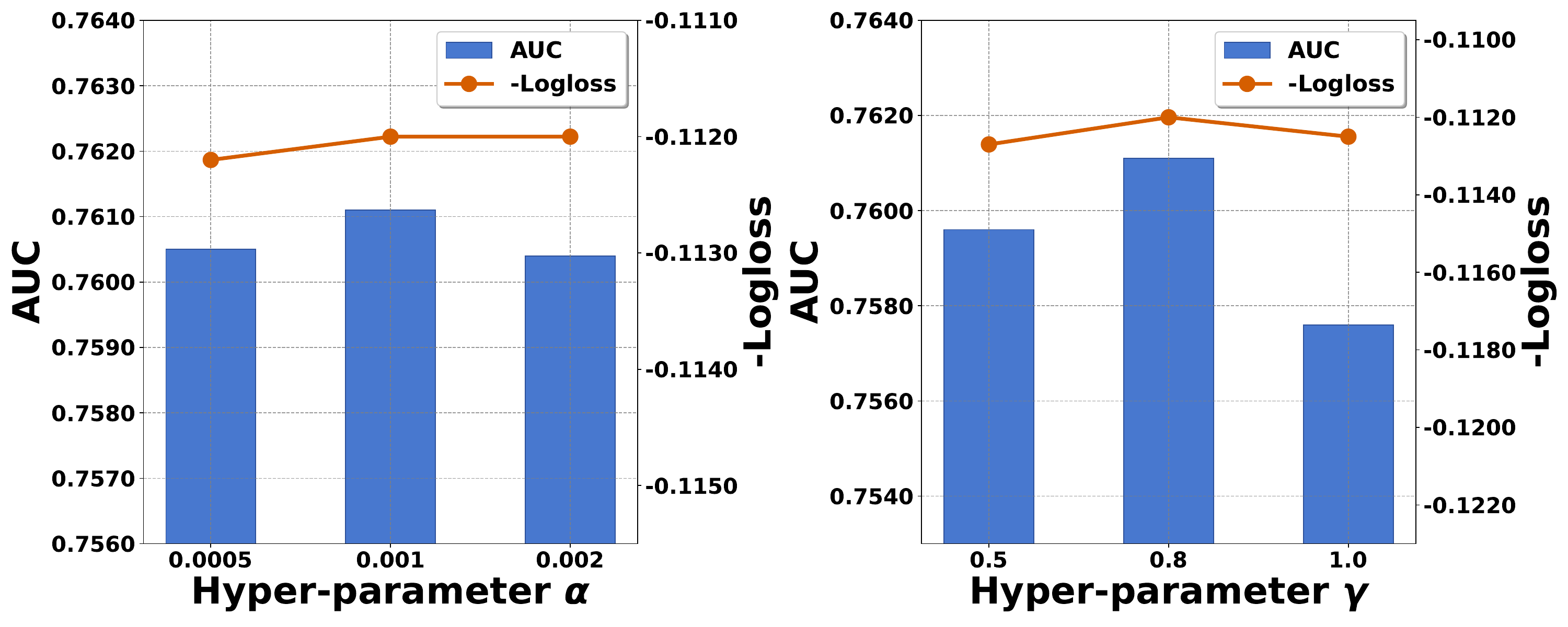}
    \caption{Hyperparameter sensitivity analysis of SpecFormer.
}
    \label{fig:hyperpara}
\end{figure}

\subsection{Online Experiment (RQ5)}
\label{sec:online}
\textbf{Model Scaling and Multi-epoch Training Strategy}. \text{SpecFormer} exhibits superior scalability through its deep architecture, making it highly effective for large-scale parameter expansion. To fully exploit its increased model capacity, we implement a multi-epoch training strategy with a progressive parameter inheritance mechanism. Unlike traditional models like DLRM, which often suffer from rapid performance saturation, \text{SpecFormer} continues to yield significant AUC gains as the number of training epochs increases. This is achieved by utilizing warmed-up embeddings at the start of each epoch and inheriting optimized parameters from the previous one, thereby effectively amplifying the benefits of model scaling as shown in Figure \ref{fig:auc_scaling}. To balance training costs and efficiency, we deployed the 3-epoch model online, achieving a significant 0.92\% AUC gain over the production DLRM baseline.
\begin{figure}[t]
    \centering
    \includegraphics[width=0.8\linewidth]{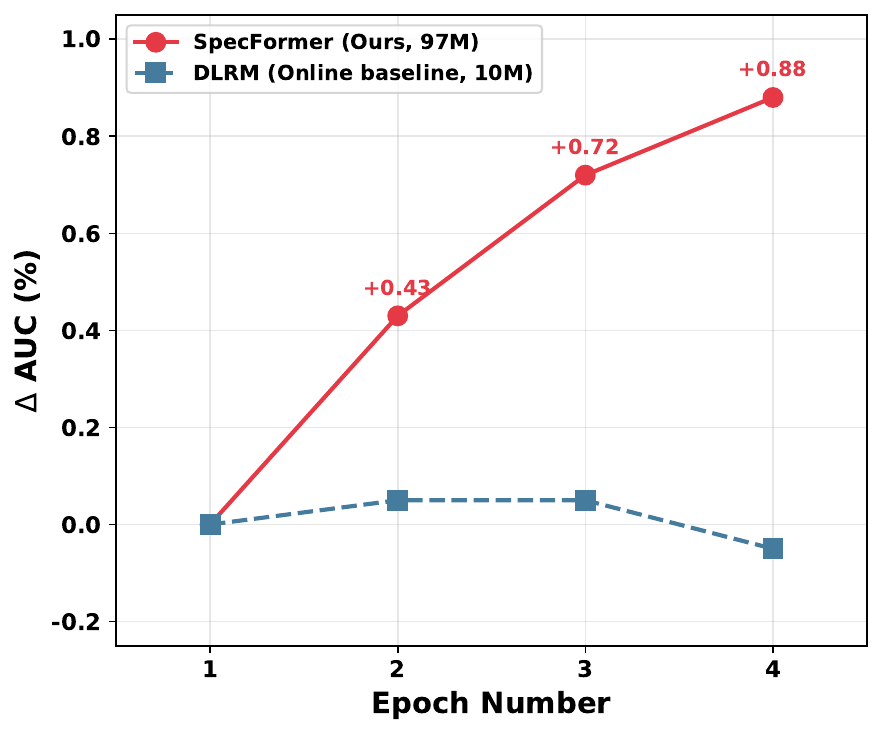}
    \caption{The scaling capacity of SpecFormer against online DLRM baseline under multi-epoch training.}
    \label{fig:auc_scaling}
\end{figure}

\textbf{Efficiency Optimization.} To address the computational complexity of SVD, we employ an in-batch SVD operator that balances spectral refinement with training throughput. On the system level, we integrate a Distributed Dynamic Embedding (DDE) framework, where embedding tables are sharded across different ranks. This allows for dynamic expansion of the feature space while significantly reducing memory footprint. To mitigate communication bottlenecks caused by high-dimensional features, we utilize coalesced lookup operations, successfully reducing embedding query frequency to 8\% of the baseline. For computational efficiency, we integrate FlashAttention, which reduces per-layer attention overhead by 40\%. Collectively, these optimizations, combined with FP16 mixed-precision training, deliver an overall 35\% improvement in training efficiency.

\textbf{Online Performance}. We conducted a large-scale A/B test on Alibaba’s e-commerce advertising platform to evaluate the real-world efficacy of \text{SpecFormer}. The experiment was carried out over a 10-day period from June 1 to June 10, 2026, involving 10\% of the platform’s production traffic. Users were randomly assigned to either the control group (a highly optimized DLRM baseline) or the experimental group (\text{SpecFormer}) using an orthogonal hashing protocol. As summarized in Table \ref{tab:online-performance}, \text{SpecFormer} achieved significant improvements across all key business metrics, including CTR (+1.34\%), and Order (+16.72\%), at the cost of merely a 5ms increase in inference latency, which remains well within the strict online service requirements.
\begin{table}[t]
\centering
\caption{Results of the online A/B experiment on a large-scale e-commerce platform. All performance gains are statistically significant with $p < 0.05$.}
\label{tab:online-performance}
\begin{tabular}{lcccc}
\toprule
\textbf{Method} & \textbf{CTR} & \textbf{CVR} & \textbf{Order} & $\Delta$ \textbf{Latency} \\ \midrule
DLRM Baseline & - & - & - & - \\
\textbf{SpecFormer} & \textbf{+1.34\%} & \textbf{+15.97\%} & \textbf{+16.72\%} & \textbf{+5ms} \\ \bottomrule
\end{tabular}
\end{table}

\section{Related Work}
\label{sec:related_work}

\subsection{Transformer-based Recommender Systems}
Transformer architectures~\cite{vaswani2017attention} have achieved remarkable success across various domains~\cite{devlin2019bert,liu2021swin}. In RS, early adoptions primarily focused on capturing sequential user behaviors~\cite{kang2018self, sun2019bert4rec,yang2023generic}. Subsequently, Transformers were introduced into CTR prediction to automatically model complex and high-order feature interactions~\cite{song2019autoint,pi2020search}. To explicitly map features into interaction spaces, early methods like AutoInt~\cite{song2019autoint} apply multi-head self-attention directly to raw feature embeddings. Building upon this, recent work explores various structural enhancements to model interaction dependencies more effectively. For instance, HiFormer~\cite{gui2023hiformer} utilizes a hierarchical Transformer to handle both fine-grained intra-feature and coarse-grained inter-feature relationships. Furthermore, to optimize interaction efficiency and alleviate parameter fragmentation, modern architectures refine the tokenization and aggregation processes. RankMixer~\cite{zhu2025rankmixer} integrates a specialized token-mixing attention tailored for ranking tasks, and OneTrans~\cite{zhang2026onetrans} formulates a unified attention framework to seamlessly incorporate multiple feature interaction paradigms.
Besides, LLMs also demonstrate strong potential in RS, either functioning as direct  recommenders~\cite{bao2023tallrec, chen2024hllm,lin2024rella,geng2024breaking} or acting as semantic enhancers for recommendation models~\cite{li2025ctrl, xi2024towards, sun2024large,cui2026field}. However, due to their prohibitive inference latency, deploying LLMs in large-scale industrial systems remains highly challenging, and they are not the focus of this paper.

Despite their success, the attention mechanism suffers from suboptimal performance due to the heterogeneity and long-tail distribution characteristics of recommendation data in RS. Existing attempts try to address this with multi-parameters~\cite{guo2023embedding, chen2026rankup} or external spectral enhancement~\cite{zhang2024id,dai2026fedin}, but they either rely heavily on spatial-domain patches or fail to integrate the core attention mechanism, resulting in a limited improvement. In contrast, SpecFormer fundamentally uncovers the limitation of attention mechanism and resolves the performance bottleneck from a spectral perspective.

\subsection{Spectral Collapse in Recommender Systems}
Embedding collapse refers to the phenomenon where representations degenerate into a low-dimensional subspace~\cite{noci2022signal, wei2024diff}, which is a problem extensively studied in Natural Language Generation~\cite{gao2019representation} and Contrastive Learning~\cite{jing2021understanding, hua2021feature}. However, as recent studies indicate~\cite{lin2025recommendation, chen2024towards}, the extreme heterogeneity and long-tail distribution of recommendation data drastically exacerbate this collapse, resulting in a highly skewed embedding space dominated by merely a few principal singular values~\cite{cui2026spectran,hu2025alphafuse}. Meanwhile, attention collapse occurs when the attention map loses its token-routing discriminability, severely limiting the expressive power of self-attention~\cite{bhojanapalli2020low, dong2021attention}. Our theoretical analysis (\cf \Cref{sec:theory}) advances this line of research by mathematically formulating the vicious cycle between embedding collapse and attention collapse during both forward and backward propagation, a phenomenon specifically severe in recommendation scenarios.
Recent works in recommendation introduce explicit spectral regularizations~\cite{zhang2024id} to prevent dimension collapse, and explore Fourier transforms~\cite{dai2026fedin} to capture frequency-domain signals for Transformer-based feature interactions. 
However, these methods typically treat spectral decomposition either as a post-hoc regularization term or as a static transformation isolated from the core attention module. SpecFormer intrinsically mitigates the embedding and attention collapse with the novel spectral-aware attention, which effectively breaks the vicious cycle of collapse in recommender systems.


\section{Conclusion}
\label{sec:conclusion}

In this paper, we investigate the bottleneck of Transformer's attention mechanism in recommender systems. Through empirical and theoretical analysis, we reveal that the intrinsic heterogeneity and long-tail distribution of recommendation data trigger a vicious cycle of spectral collapse in original attention: the collapsed embeddings induce smoothing and low-rank attentions, which in turn act as low-pass filters that starve minor spectral components of gradients during backpropagation. To fundamentally break this bottleneck, we propose SpecFormer, a novel Spectral-Aware Transformer. By introducing a \textit{Learnable Spectral Softening} module, modeling interactions via a \textit{Spectrum-softened Attention}, and the \textit{Spectral Residual Position Encoding}, SpecFormer effectively prevents principal spectral components from dominating the attention map. Extensive offline evaluations and online A/B testing  demonstrate that SpecFormer not only achieves state-of-the-art performance but also unlocks the depth-scaling potential of Transformers for recommendation.



\bibliographystyle{IEEEtran}
\bibliography{ref}

\end{document}